\theoremstyle{plain}
\newtheorem{theorem}{Theorem}[section]
\newtheorem{proposition}[theorem]{Proposition}
\newtheorem{lemma}[theorem]{Lemma}
\newtheorem{corollary}[theorem]{Corollary}
\theoremstyle{definition}
\theoremstyle{remark}
\renewcommand\footnotemark{}
\begin{document}

\title
{Characterizations and Constructions of Linear  \\ Intersection Pairs of Cyclic Codes\\ over Finite Fields}

\date{}

\author{Somphong Jitman}

 \thanks{S. Jitman (Corresponding Author)  is  with the Department of Mathematics, Faculty of Science,
 	          Silpakorn University  Nakhon Pathom 73000,  Thailand (sjitman@gmail.com).}

 \thanks{The author declares that there is no conflict of interest regarding the publication of this paper.}

 \thanks{No underlying data was collected or produced in this study.}

 \maketitle


\begin{abstract} 
 Linear intersection pairs of linear codes have  become of interest due to their nice algebraic properties and wide applications. 
In this paper,  we focus on linear intersection pairs of  cyclic  codes over finite fields.  Some properties of cyclotomic cosets in cyclic groups are presented as key tools in the study of   such linear intersection pairs.  Characterization and constructions of  two cyclic  codes  of a fixed intersecting dimension  are given in terms of their generator polynomials and  cyclotomic cosets. In some cases,  constructions of  two cyclic  codes  of a fixed intersecting subcode are presented as well. 
Based on the theoretical characterization,  some numerical  examples of linear intersection pairs of  cyclic  codes with  good parameters are illustrated.

\smallskip
{\bf 2010 Mathematics Subject Classification}: 94B15; 94B05

\smallskip

{\bf Keywords}: Cyclic codes; Cyclotomic cosets; Generator Polynomials; Linear Intersection Pairs; optimal codes
\end{abstract}

 		\section{Introduction}

 	Cyclic codes form an important family of linear codes that have been extensively studied for both theoretical and practical reasons.    Algebraic structures and properties of cyclic codes have been studied though a one-to-one correspondence between the cyclic codes of length  $n$  over a finite field $\mathbb{F}_q$  and the ideals in the principal ideal ring   $\mathbb{F}_q[x]/\langle x^n-1\rangle $. In applications, cyclic codes are inserting since they can be easily implemented in shift registers \cite{Mac}.  Cyclic codes with additional properties such as self-orthogonality, complementary duality, and self-duality are useful in  practice. In \cite{Huff03} and \cite{KR2012}, characterization and algebraic structures of self-orthogonal cyclic codes have been presented.  Complementary dual cyclic codes over finite fields have  studied in \cite{CG2016} and  \cite{YM1998}. In  \cite{Jia11}, the characterization and enumeration of self-dual cyclic codes have been established.    Based on the algebraic structures of cyclic codes described in \cite{Huff03, Jia11}, hulls and hull dimension of cyclic codes have been studied in \cite{ SJU2015, S2003}.   In \cite{carlet1},   linear complementary pairs (LCPs) of codes have been introduced  as a generalization of  complementary dual codes and  they have been  extensively studied due to their rich algebraic structure and wide applications in cryptography.
 	For example, in \cite{carlet1} and \cite{carlet2},  it was shown that these pairs of codes can be used to counter passive and active side-channel analysis attacks on embedded crypto systems.
 	Several construction of LCPs of codes were also given.

 	In \cite{GGJT2020}, the  author  introduced the notion of a linear intersection pair of linear codes  as well as applications in constructions of  entanglement-assisted quantum error-correcting codes (EAQECCs). Precisely, for a non-negative integer  $\ell$, linear codes  $C$ and $D$  of the same length  $n$  over $\mathbb{F}_q$   form a linear  $\ell$-intersection pair if   $\dim(C\cap D)=\ell$. This concept can be viewed as a generalization of self-orthogonal codes,  complementary dual codes \cite{Jin2017,M1992,QZ2018}, hulls \cite{TJ2020}, and linear complementary pairs \cite{carlet1,carlet2}.  Precisely, a self-orthogonal code   $C$ is a linear  $\ell$-intersection pair  with $D=C^\perp$  and  $\dim(C)=\ell$, a complementary dual code  $C$  is a linear  $\ell$-intersection pair with   $D=C^\perp$  and  $\ell=0$,   $C$ has hull dimension  $\ell$  if  $D=C^\perp$, and  $C$   and   $D$  form a LCP if  $C$ and   $D$ form a linear  $\ell$-intersection pair and  $\dim(C)+\dim(D)=n$.

 	In this paper,         we focus on linear intersection pairs of  cyclic  codes over finite fields.  Some properties of cyclotomic cosets in cyclic groups are presented as key tools in the study of   such linear intersection pairs.  Characterization and constructions of  cyclic  codes pairs  of a fixed intersecting dimension  are given in terms of their generator polynomials and  cyclotomic cosets. In some cases,  constructions of  two cyclic  codes   of a fixed intersecting subcode are presented as well. 
 	Based on the theoretical characterization,  some numerical  examples of linear intersection pairs of  cyclic  codes with  good parameters.

 	The paper is organized as follows. Some preliminary results are recalled in Section 2. In Section 3, properties of cyclotomic cosets of cyclic codes are presented as well as characterization and constructions of  two cyclic   codes  of a fixed intersecting dimension. 
 	Numerical  examples of linear intersection pairs of  cyclic  codes with  good parameters are presented in Section 4. 
 	Discussion and summary are  given in Section 5.

 	\section{Preliminary}
 	
 	In this section, basic concepts and properties of linear codes, cyclotomic cosets, cyclic codes and linear intersection pairs are recalled. The reader may refer to  \cite{GGJT2020}, \cite{Huff03},  \cite{JLLX2012}, and \cite{Lidl97} for more details. 
 	
 	\subsection{Linear Codes and   Linear Intersection Pairs}
 	
 	For a prime power $q$, let $\mathbb{F}_q$ denote the  finite field of $q$ elements. For a positive integer $n$, let $\mathbb{F}_q^n$ denote the vector space of all $n$-tuples over $\mathbb{F}_q$. A \textit{linear code} $C$ of length $n$ and dimension $k$ over $\mathbb{F}_q$ is defined to be a $k$-dimensional subspace of $\mathbb{F}_q^n$.   In this case, $C$ is called an   $[n,k]_q$  code. In addition,   if ${\rm wt}(C)=\min\{ {\rm wt}(\boldsymbol{c})\mid \boldsymbol{c} \in C\setminus \{\boldsymbol{0}\}\}$ is $d$,  the code $C$ is referred  to as an  $[n,k,d]_q$  code, where  $ {\rm wt}(\boldsymbol{x})=|\{i\mid x_i\ne 0\}$ for all $ {\rm wt}(\boldsymbol{x})=(x_1,x_2,\dots, x_n)\in \mathbb{F}_q^n$.
 	The \textit{dual} of a linear code $C$ of length $n$ over $\mathbb{F}_q$  is defined to be a linear code of the form 
 	\[C^\perp=\left\{(x_0,\ldots,x_{n-1})\in\mathbb{F}_q^n \,\left\vert\, \sum_{i=0}^{n-1} x_ic_i=0 ~\text{ for all}~ (c_0,\ldots,c_{n-1})\in C \right.\right\}\]
 	and the \textit{hull} of $C$  is defined to be $Hull(C)=C\cap C^\perp$.  A linear code  $C$  is said to be \textit{self-orthogonal} if   $C\subseteq C^\perp$. A linear code  $C$  is said to be \textit{self-dual} 
 	(resp., \textit{linear complementary dual} (LCD))  if    $C=Hull(C)=C^\perp$   (resp., $Hull(C)=\{0\}$).  
 	
 	Linear codes  $C$ and $D$  of the same length  $n$  over $\mathbb{F}_q$   are said to be a {\em linear complementary pair} (LCP)  \cite{carlet2} if $C\cap D=\{0\}$ and $C+ D=\mathbb{F}_q^n$. In \cite{GGJT2020},  linear intersection pair of linear codes has been introduced as a generalization of hulls and LCPs.  Precisely, for a non-negative integer  $\ell$, linear codes  $C$ and $D$  of the same length  $n$  over $\mathbb{F}_q$   form a {\em linear  $\ell$-intersection pair} if   $\dim(C\cap D)=\ell$.  Without loss of generality, it can be assumed that $0\leq \ell \leq \dim(C)\leq \dim(D) \leq n$.  
 	
 	The concept of linear intersection pairs can be viewed as a generalization of self-orthogonal codes, complementary dual codes, hulls, and linear complementary pairs.  
 	Precisely,  a self-orthogonal code $C$ induces a linear $\ell$-intersection pair  with $D=C^\perp$  and $\ell=\dim(C)$, a complementary dual code $C$ induces a linear $\ell$-intersection pair with
 	$D=C^\perp$ and $\ell=0$,  $C$ has hull dimension $\ell$ if $D=C^\perp$, and $C$ and $D$ form an LCP if $C$ and $D$ form a linear $\ell$-intersection pair such 
 	that $\dim(C)+\dim(D)=n$.
 	Preliminary results on linear $\ell$-intersection pairs of linear codes  have been investigated in \cite{GGJT2020}.

 	\subsection{Cyclic Codes}
 	
 	A linear code $C$ of length $n$ over $\mathbb{F}_q$ is said to be \textit{cyclic}  if 
 	\begin{center}
 		$(c_{n-1}, c_0,\ldots, c_{n-2})\in C$ whenever $(c_0, c_1,\ldots, c_{n-1})\in C$.
 	\end{center}
 	It is well known  (see, e.g., \cite{Huff03})) that every  cyclic code  $C$  of length $n$ over $\mathbb{F}_q$ can be viewed as an ideal in the principal ideal  ring $\mathbb{F}_q[x]/\left\langle x^n-1\right\rangle$ generated by a unique monic divisor of $x^n-1$ of minimal degree in the said ideal. Such a monic polynomial is called the \textit{generator polynomial} of the cyclic code $C$.   
 	
 	\begin{theorem}\label{genC}
 		Let $C$  be a cyclic code of length $n$ over $\mathbb{F}_q$ generated by $g(x)$ and let $k$ be an integer such that $0\leq k\leq n$. Then $C$  has dimension $k$ if and only if $\deg(g(x))=n-k$.
 	\end{theorem}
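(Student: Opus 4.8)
The plan is to prove the single quantitative statement that $\dim_{\mathbb{F}_q}(C) = n - \deg(g(x))$; the claimed equivalence then follows at once, since $\dim(C) = k$ holds precisely when $n - \deg(g(x)) = k$, i.e. when $\deg(g(x)) = n-k$. Throughout I work in the ring $R = \mathbb{F}_q[x]/\langle x^n-1\rangle$, identifying each class with its unique representative of degree less than $n$, and I use the two defining features of the generator polynomial: that $g(x)$ generates the ideal $C$, and that $g(x)$ is the monic nonzero polynomial of least degree lying in $C$. Write $r = \deg(g(x))$.

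First I would exhibit the natural spanning set and check independence. Consider the $n-r$ elements $g(x), xg(x), \ldots, x^{n-r-1}g(x)$. Each has degree between $r$ and $n-1$, so each is already reduced modulo $x^n-1$ and no wrap-around occurs. Since these polynomials have pairwise distinct degrees $r, r+1, \ldots, n-1$, they are linearly independent over $\mathbb{F}_q$. It remains to show that they span $C$, which is the heart of the argument.

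For spanning I would take an arbitrary $c(x) \in C$, represented by a polynomial of degree less than $n$, and divide it by $g(x)$ in $\mathbb{F}_q[x]$: $c(x) = q(x)g(x) + s(x)$ with $s(x)=0$ or $\deg(s(x)) < r$. The key observation is that $\deg(q(x)g(x)) = \deg(c(x)-s(x)) < n$, so the product $q(x)g(x)$ is its own reduced representative and therefore lies in the ideal $C$ (being a multiple of $g(x)$). Consequently $s(x) = c(x) - q(x)g(x)$ also lies in $C$; as it has degree strictly smaller than $r = \deg(g(x))$, the minimality of $g(x)$ forces $s(x)=0$. Thus $c(x)=q(x)g(x)$ with $\deg(q(x)) < n-r$, so $c(x)$ is an $\mathbb{F}_q$-linear combination of $g(x), xg(x), \ldots, x^{n-r-1}g(x)$. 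Hence these elements form a basis of $C$ and $\dim_{\mathbb{F}_q}(C) = n - r$, which yields the theorem.

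Finally, a word on where the difficulty lies: the only nontrivial step is the spanning argument, and within it the point that requires care is keeping all relevant products of degree less than $n$, so that the Euclidean division carried out in $\mathbb{F}_q[x]$ is compatible with the arithmetic of $R$ — this is exactly what permits the minimal-degree property of $g(x)$ to be invoked to kill the remainder. Linear independence and the bookkeeping $\dim(C) = n-r \iff \deg(g(x)) = n-k$ are routine by comparison.
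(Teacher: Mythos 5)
Your proof is correct. The paper states this theorem without proof, citing it as a well-known fact from \cite{Huff03}; your argument --- exhibiting $g(x), xg(x), \ldots, x^{n-r-1}g(x)$ as a basis via distinct degrees for independence and Euclidean division plus the minimality of $\deg(g)$ for spanning --- is exactly the standard proof found there, and you correctly isolate the one delicate point (keeping all products of degree below $n$ so that division in $\mathbb{F}_q[x]$ is compatible with arithmetic in $\mathbb{F}_q[x]/\langle x^n-1\rangle$).
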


 	The following theorem  (cf.  \cite[Theorem 4.3.7]{Huff03}) is useful in the  investigation of  algebraic structures of  a linear $\ell$-intersection pair of  cyclic codes.

 	\begin{theorem}\label{lcm}
 		Let $C_1$ and $C_2$ be cyclic codes of length $n$ over $\mathbb{F}_q$ generated by $g_1(x)$ and $g_2(x)$, respectively. Then $C_1\cap C_2$ is a cyclic code generated by 
 		${\rm lcm}(g_1(x), g_2(x))$ and  $C_1+C_2$ is a cyclic code generated by 
 		${\rm gcd}(g_1(x), g_2(x))$.
 	\end{theorem}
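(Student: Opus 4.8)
The plan is to work entirely inside the principal ideal ring $R=\mathbb{F}_q[x]/\langle x^n-1\rangle$, where $C_1=\langle g_1(x)\rangle$ and $C_2=\langle g_2(x)\rangle$ with $g_1,g_2$ monic divisors of $x^n-1$. The single technical ingredient I would establish first is a clean description of each ideal in terms of polynomial divisibility: for a monic divisor $g$ of $x^n-1$, a polynomial $c(x)$ with $\deg c<n$ represents an element of $\langle g\rangle$ if and only if $g(x)\mid c(x)$ in $\mathbb{F}_q[x]$. The nontrivial direction uses the reduction modulo $x^n-1$: any ideal element is the residue of $f(x)g(x)$, and writing $fg=q(x)(x^n-1)+r(x)$ with $\deg r<n$, the relation $g\mid x^n-1$ forces $g\mid r$, so the canonical representative is still a multiple of $g$. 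This lemma is where the quotient structure has to be handled carefully, and it is the main (if modest) obstacle.

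With that in hand, the sum is immediate. Since $C_1+C_2$ is the ideal generated by $g_1$ and $g_2$, I would invoke B\'ezout in the principal ideal domain $\mathbb{F}_q[x]$ to write $\gcd(g_1,g_2)=a(x)g_1(x)+b(x)g_2(x)$, placing $\gcd(g_1,g_2)$ in $C_1+C_2$; conversely $g_1$ and $g_2$ are multiples of $\gcd(g_1,g_2)$, so both lie in $\langle\gcd(g_1,g_2)\rangle$. Because $\gcd(g_1,g_2)\mid g_1\mid x^n-1$, it is a legitimate generator polynomial, and the two inclusions give $C_1+C_2=\langle\gcd(g_1,g_2)\rangle$.

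For the intersection I would argue directly with the divisibility description. A polynomial $c$ of degree less than $n$ lies in $C_1\cap C_2$ exactly when $g_1\mid c$ and $g_2\mid c$, which by the defining property of the least common multiple is equivalent to $\mathrm{lcm}(g_1,g_2)\mid c$, that is, $c\in\langle\mathrm{lcm}(g_1,g_2)\rangle$. Since $x^n-1$ is a common multiple of $g_1$ and $g_2$, we have $\mathrm{lcm}(g_1,g_2)\mid x^n-1$, so it too is a valid generator polynomial, yielding $C_1\cap C_2=\langle\mathrm{lcm}(g_1,g_2)\rangle$. In both cases the resulting object is visibly an ideal, hence a cyclic code.

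As a consistency check I would note that the degree identity $\deg(\gcd(g_1,g_2))+\deg(\mathrm{lcm}(g_1,g_2))=\deg g_1+\deg g_2$ matches the dimension formula $\dim(C_1\cap C_2)+\dim(C_1+C_2)=\dim C_1+\dim C_2$ via Theorem \ref{genC}, confirming that no dimension is lost in the passage modulo $x^n-1$.
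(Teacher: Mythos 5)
Your proof is correct; the paper itself states this result without proof, citing Theorem 4.3.7 of Huffman--Pless, and your argument (the divisibility lemma for ideals of $\mathbb{F}_q[x]/\langle x^n-1\rangle$, B\'ezout for the sum, and the universal property of the lcm for the intersection) is essentially the standard argument given in that reference. Nothing further is needed.
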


 	\subsection{Cyclotomic Cosets  in  Finite Cyclic Groups}

 	As monic  divisors of  $x^n-1\in \mathbb{F}_q[x]$ are key in the study of cyclic codes, we recall the factorization of  $x^n-1$   into a  product of monic irreducible polynomials.  Throughout the rest of this paper,  we assume  that $\mathbb{F}_q$ is a finite field of characteristic $p$ and order $q$.  Given a positive integer $n$, we can write $n=p^{\nu}{n'}$,  where $p\nmid n'$ and $\nu$ is an integer.  In $\mathbb{F}_q[x]$, it is not difficult to see that $x^n-1= (x^{n'}-1)^{p^\nu}$,  and hence,  it is sufficient to focus on the factorization of  $x^{n'}-1$.

 	For coprime positive integers $i$ and $j$,  let $ {\rm ord}_{j}(i)$ denote the multiplicative order of $i$ modulo $j$.  Let $q$ be a prime power and let $n' $ be a positive integer   such that  $\gcd(n',q)=1$.   For  each $a\in \mathbb{Z}_{n'}$, denote by ${\rm ord}(a)$ the additive order of $a$ in $\mathbb{Z}_{n'}$. 
 	For  each $a\in \mathbb{Z}_{n'}$,  the {\em $q$-cyclotomic  coset  of $\mathbb{Z}_{n'}$ containing $a$} is defined to be  the set 	\begin{align*}
 		S_q(a):=&\{q^i\cdot a \mid i=0,1,\dots\}, 
 	\end{align*}
 	where $q^i\cdot a:= \sum\limits_{j=1}^{q^i}a$ in $\mathbb{Z}_{n'}$.

 	From \cite{JLLX2012}, we have the following results. 
 	
 	\begin{proposition} \label{prop:coset} Let $q$ be a prime power  such that  $\gcd(n',q)=1$ and let  $a\in \mathbb{Z}_{n'}$.  Then the following statements hold. 
 		
 		\begin{enumerate}
 			\item $\gcd({\rm ord}_{{\rm ord}(a)}(q) ,q)=1$ and $	S_q(a)=\{q^i\cdot a  \mid 0\leq i< {\rm ord}_{{\rm ord}(a)}(q) \}$. Equivalently,  $|S_q(a)|={\rm ord}_{{\rm ord}(a)}(q) $.
 			\item The elements  in  $S_q(a) $ have the same order. 
 			\item The elements in $ \mathbb{Z}_{n'}$ of  the same order are partitioned in to $q$-cyclotomic cosets of the same size. 
 			\item   The cyclic group $ \mathbb{Z}_{n'} $ can be decomposed of the form 
 			\begin{align}\mathbb{Z}_{n'} =\bigcup_{d|n'} A_d= \bigcup_{d|n'} \bigcup_{a\in A_d} S_q(a) , \label{decomZn}\end{align}
 			where $A_d$ is the set of elements of order $d$ in $ \mathbb{Z}_{n'}$.  The union can be re-indexed  to be  disjoint.
 			\item The number of $q$-cyclotomic cosets in $ \mathbb{Z}_{n'}$ is
 			
 			\[   \sum_{d|n'} \dfrac{\phi(d)}{{\rm ord}_{ d}(q)},\] 
 		\end{enumerate}
 		where $\phi$ is the  Euler's totient function. 
 	\end{proposition}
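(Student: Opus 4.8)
The plan is to exploit the single structural fact that, because $\gcd(n',q)=1$, multiplication by $q$ is a group automorphism $\mu_q\colon \mathbb{Z}_{n'}\to\mathbb{Z}_{n'}$, $\mu_q(x)=q\cdot x$, and that the $q$-cyclotomic coset $S_q(a)$ is precisely the orbit of $a$ under the cyclic subgroup $\langle\mu_q\rangle$ of ${\rm Aut}(\mathbb{Z}_{n'})$. Every assertion then reduces to elementary orbit bookkeeping, and the five parts are naturally proved in order, each feeding the next.

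For (1), I would first note that ${\rm ord}(a)$ divides $n'$ (standard in a cyclic group), so $\gcd({\rm ord}(a),q)=1$; this is what makes ${\rm ord}_{{\rm ord}(a)}(q)$ well-defined under the paper's convention that ${\rm ord}_j(i)$ is defined only for coprime $i,j$. Writing $d={\rm ord}(a)$ and $r={\rm ord}_d(q)$, the key congruence is that $q^i\cdot a=q^j\cdot a$ in $\mathbb{Z}_{n'}$ if and only if $(q^i-q^j)a\equiv 0\pmod{n'}$, which by the defining property of the additive order $d$ is equivalent to $q^i\equiv q^j\pmod{d}$. Since $q$ is a unit modulo $d$ with multiplicative order $r$, the residues $q^0,\dots,q^{r-1}$ are pairwise distinct modulo $d$ while $q^r\equiv q^0$; hence the first $r$ iterates of $a$ are distinct and the orbit closes up, giving $S_q(a)=\{q^i\cdot a\mid 0\le i<r\}$ and $|S_q(a)|=r={\rm ord}_d(q)$.

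For (2), I would observe that an automorphism preserves additive order; concretely, $m(q\cdot a)\equiv 0\pmod{n'}$ iff $d\mid mq$ iff $d\mid m$ (using $\gcd(d,q)=1$, which holds since $d\mid n'$), so ${\rm ord}(q\cdot a)={\rm ord}(a)$, and by induction every element of $S_q(a)$ has order $d$. Consequently each set $A_d$ of elements of order $d$ is $\mu_q$-invariant, so it is a disjoint union of $q$-cyclotomic cosets, and by (1) each such coset has the common size ${\rm ord}_d(q)$; this is exactly (3). Statement (4) then follows from the classical partition of $\mathbb{Z}_{n'}$ into the sets $A_d$ indexed by divisors $d$ of $n'$, refined by splitting each $A_d$ into its constituent orbits, the promised re-indexing amounting to a choice of one representative per orbit.

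For (5), the counting is immediate once (3) is in hand: in the cyclic group $\mathbb{Z}_{n'}$ the number of elements of order $d$ is $|A_d|=\phi(d)$ for each divisor $d$ of $n'$, and since $A_d$ is tiled by orbits of equal size ${\rm ord}_d(q)$, it contains $\phi(d)/{\rm ord}_d(q)$ cosets; summing over $d\mid n'$ yields the stated total. I expect the only point requiring genuine care to be the equivalence $q^i\cdot a=q^j\cdot a\iff q^i\equiv q^j\pmod{d}$ in (1), since everything downstream—the equal coset sizes, the $\mu_q$-invariance of $A_d$, and the final enumeration—rests on correctly translating the additive order of $a$ in $\mathbb{Z}_{n'}$ into a multiplicative-order statement for $q$ modulo $d$.
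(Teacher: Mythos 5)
Your proof is correct and complete, but there is no in-paper argument to compare it against: the paper states this proposition with the single line ``From \cite{JLLX2012}, we have the following results'' and supplies no proof. Your route---viewing $S_q(a)$ as the orbit of $a$ under the automorphism $x\mapsto q\cdot x$ of $\mathbb{Z}_{n'}$ and reducing everything to the equivalence $q^i\cdot a=q^j\cdot a \iff q^i\equiv q^j \pmod{{\rm ord}(a)}$---is the standard argument and is exactly the kind of orbit bookkeeping the cited source relies on; parts (2)--(5) then follow from (1) just as you describe. One point worth recording: the first assertion of the proposition as printed, $\gcd({\rm ord}_{{\rm ord}(a)}(q),q)=1$, is false as literally stated (take $n'=5$, $q=2$, $a=1$: then ${\rm ord}(a)=5$, ${\rm ord}_{5}(2)=4$, and $\gcd(4,2)=2$); it is evidently a typographical slip for $\gcd({\rm ord}(a),q)=1$, which is what you prove and what is needed for ${\rm ord}_{{\rm ord}(a)}(q)$ to be well defined. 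With that reading, each of your steps---the orbit-size computation, the invariance of the additive order under multiplication by $q$, the tiling of $A_d$ by cosets of the common size ${\rm ord}_d(q)$, the refinement of the order partition of $\mathbb{Z}_{n'}$, and the count $\sum_{d\mid n'}\phi(d)/{\rm ord}_d(q)$---is sound.
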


 	From  \cite{Lidl97} and  \eqref{decomZn},  there is a one-to-one correspondence between the irreducible factors of  $x^{{n'}}-1$ and the $q$-cyclotomic cosets of $\mathbb{Z}_{n'}$  and we have

 	\begin{align}  x^{{n'}}-1=  \prod_{d|n'} \prod_{i=1}^{\frac{\phi(d)}{{\rm ord}_{ d}(q)}} f_{d,i}(x), \label{xn-1}\end{align}
 	where  $f_{d,i}(x)$ is an irreducible polynomial determined by some $q$-cyclotomic coset of  $\mathbb{Z}_{n'}$.  Precisely,  \[f_{d,i}(x)= \prod_{j\in S_q(a)} (x-\alpha^j)\]  for some  $a\in  \mathbb{Z}_{n'}$, where $\alpha $ is a primitive $n'$th root of unity.  In this case,  $\deg(f_{d,i}(x))= |S_q(a)|= {\rm ord}_{ {\rm ord}(a)}(q)$.

 	\section{Linear Intersection Pairs of Cyclic Codes}
 	In this section,  characterization on the existence of linear intersection pairs of cyclic codes is given  together with their constructions.

 	From  Theorem \ref{genC} and Theorem \ref{lcm}, the following useful theorem can be deduced directly. 
 	\begin{theorem}  \label{thm:degGen}
 		Let $C_1$ and $C_2$ be cyclic codes of length $n$ over $\mathbb{F}_q$ generated by $g_1(x)$ and $g_2(x)$, respectively. Let $\ell$ be an integer such that $0\leq \ell \leq n$. Then $C_1$ and $ C_2$ form  a linear $\ell$-intersection pair  if and only if 
 		$\deg({\rm lcm}(g_1(x), g_2(x)))=n-\ell$.  
 	\end{theorem}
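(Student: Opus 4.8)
The plan is to deduce the equivalence by directly composing the two structural theorems already available, since the statement is merely a dictionary between the dimension of the intersection code and the degree of its generator polynomial.

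First I would unwind the definition: by the notion of a linear $\ell$-intersection pair, the codes $C_1$ and $C_2$ form such a pair precisely when $\dim(C_1\cap C_2)=\ell$. Hence it suffices to prove that $\dim(C_1\cap C_2)=\ell$ if and only if $\deg({\rm lcm}(g_1(x),g_2(x)))=n-\ell$.

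Next I would apply Theorem \ref{lcm}, which tells us that $C_1\cap C_2$ is again a cyclic code of length $n$ over $\mathbb{F}_q$ whose generator polynomial is ${\rm lcm}(g_1(x),g_2(x))$. Before invoking Theorem \ref{genC} on this intersection code, I would check that ${\rm lcm}(g_1(x),g_2(x))$ is indeed an admissible generator polynomial, that is, a monic divisor of $x^n-1$. This is immediate: since $g_1(x)$ and $g_2(x)$ are monic divisors of $x^n-1$, the polynomial $x^n-1$ is a common multiple of both, so their (monic) least common multiple divides $x^n-1$.

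Finally, applying Theorem \ref{genC} to the cyclic code $C_1\cap C_2$ with generator polynomial ${\rm lcm}(g_1(x),g_2(x))$ yields $\dim(C_1\cap C_2)=k$ if and only if $\deg({\rm lcm}(g_1(x),g_2(x)))=n-k$ for every integer $0\le k\le n$; specializing to $k=\ell$ gives exactly the desired equivalence. The only step demanding any care is the verification that the lcm remains a valid generator polynomial so that Theorem \ref{genC} applies; past that, the argument is a straightforward chaining of the two cited results, consistent with the author's remark that the theorem can be deduced directly.
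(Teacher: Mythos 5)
Your argument is correct and is exactly the deduction the paper intends: the paper states the theorem follows directly from Theorem \ref{genC} and Theorem \ref{lcm} without writing out the details, and your chaining of the two (with the small but worthwhile check that the lcm is a monic divisor of $x^n-1$) fills them in faithfully.
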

 	
 	\subsection{The Existence of Linear Intersection Pairs of Cyclic Codes}
 	
 	Fist, we  give necessary and sufficient conditions to have  a linear  $\ell$-intersection  pair of cyclic codes of length $n$ over $\mathbb{F}_q$ regardless the dimensions of the two cyclic codes. This is equivalent to the existence of a cyclic code of length $n$ and dimension $\ell$ over  $\mathbb{F}_q$ which is equivalent to that of its dual has dimension $n-\ell$. Precisely, each   $\ell$-dimensional  cyclic code of length $n$ over $\mathbb{F}_q$ can be extended to  have a pair of cyclic codes which is a linear  $\ell$-intersection  pair.  This can be  summarized  in the following lemma.

 	\begin{lemma} \label{lem:char-non-dim}
 		Let $\mathbb{F}_q$ be a finite field   and let $n$ be a positive integer. Let $\ell$  be  an integer such  that   $0\leq  \ell  \leq n$.   Then  the following statements are equivalent.

 		\begin{enumerate}[$i)$]
 			\item There exists  a linear $\ell$-intersection pair of  cyclic codes of length $n$ over $\mathbb{F}_q$.
 			\item  $x^n-1$ has a monic divisor of degree $n-\ell$.
 			\item  $x^n-1$ has a monic divisor of degree $\ell$. 
 		\end{enumerate}
 	\end{lemma}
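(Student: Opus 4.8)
The plan is to prove the three statements equivalent by combining Theorem~\ref{thm:degGen} with the elementary divisibility structure of $x^n-1$. I would begin with $(ii)\Leftrightarrow(iii)$, which is purely polynomial arithmetic. Since $x^n-1$ is monic of degree $n$, any monic divisor $h(x)$ of degree $n-\ell$ admits a complementary factor $(x^n-1)/h(x)$; this factor is automatically monic (the leading coefficient of the product equals the product of the leading coefficients, and both $x^n-1$ and $h(x)$ are monic) and has degree $n-(n-\ell)=\ell$. Running the same argument in reverse sends a monic divisor of degree $\ell$ back to one of degree $n-\ell$, so the two statements are equivalent.

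Next I would establish $(i)\Rightarrow(ii)$. Suppose cyclic codes $C_1,C_2$ generated by $g_1(x),g_2(x)$ form a linear $\ell$-intersection pair. By Theorem~\ref{thm:degGen} this means $\deg(\mathrm{lcm}(g_1(x),g_2(x)))=n-\ell$. Since $g_1(x)$ and $g_2(x)$ are each monic divisors of $x^n-1$, their least common multiple is again a monic divisor of $x^n-1$; this is the only step drawing on the ideal/divisor correspondence for cyclic codes rather than on formal manipulation. Hence $x^n-1$ has a monic divisor of degree $n-\ell$, which is exactly $(ii)$.

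Finally, for $(ii)\Rightarrow(i)$ I would give an explicit construction. Given a monic divisor $g(x)$ of $x^n-1$ of degree $n-\ell$, take $C_1=C_2$ to be the cyclic code generated by $g(x)$. Then $\mathrm{lcm}(g(x),g(x))=g(x)$ has degree $n-\ell$, so Theorem~\ref{thm:degGen} immediately yields that $C_1$ and $C_2$ form a linear $\ell$-intersection pair; equivalently, Theorem~\ref{genC} shows this code has dimension $\ell$ and coincides with its own intersection. Chaining the three implications closes the equivalence. I do not anticipate a serious obstacle here: the only point requiring genuine care is checking that monicity is preserved both when passing to complementary factors and when forming the lcm, so that every polynomial produced is a bona fide monic divisor of $x^n-1$ as the statement requires.
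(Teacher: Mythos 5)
Your proof is correct and follows essentially the same route the paper intends: the paper leaves the lemma as an immediate consequence of Theorems~\ref{genC} and~\ref{thm:degGen}, remarking that the existence of an $\ell$-intersection pair amounts to the existence of an $\ell$-dimensional cyclic code (equivalently a monic divisor of degree $n-\ell$, with the complementary factor giving degree $\ell$), and that any such code paired with itself yields the desired pair. Your write-up simply makes these three implications explicit, including the routine monicity checks.
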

 	
 	More precise and practical conditions for the  existence of  a linear $\ell$-intersection pair of  cyclic codes of length $n$ over $\mathbb{F}_q$  are given in the following proposition and corollary.
 	
 	\begin{proposition} \label{prop:exist} Let $\mathbb{F}_q$ be a finite field of characteristic $p$ and let $n=p^\nu n'$ be a positive integer such that   $p\nmid n'$ and $\nu\geq 0$. Let $\ell$  be  an integer such  that   $0\leq  \ell  \leq n$.   Then there exists  a linear $\ell$-intersection pair of  cyclic codes of length $n$ over $\mathbb{F}_q$ if and only if 
 		
 		\[ \sum _{d|n'}   {\rm ord}_{ d}(q)\cdot s_d =\ell\] 
 		has an integer  solution such that $0\leq s_d \leq   \dfrac{p^\nu \phi(d)}{{\rm ord}_{ d}(q)}$ for all $d|n'$.
 	\end{proposition}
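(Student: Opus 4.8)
The plan is to reduce the statement to a counting problem about the divisors of $x^n-1$ by invoking Lemma \ref{lem:char-non-dim}, which asserts that a linear $\ell$-intersection pair of cyclic codes of length $n$ over $\mathbb{F}_q$ exists if and only if $x^n-1$ has a monic divisor of degree $\ell$. Thus it suffices to determine exactly which degrees occur among the monic divisors of $x^n-1$, and then to show that the set of such degrees coincides with the set of values $\sum_{d\mid n'}{\rm ord}_d(q)\, s_d$ attainable under the stated bounds on the $s_d$.

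To this end I would begin from the factorization $x^n-1=(x^{n'}-1)^{p^\nu}$ recalled in Section 2, combined with the decomposition \eqref{xn-1} of $x^{n'}-1$ into monic irreducibles. This gives
\[
x^n-1=\prod_{d\mid n'}\ \prod_{i=1}^{\phi(d)/{\rm ord}_d(q)} f_{d,i}(x)^{p^\nu},
\]
where each $f_{d,i}(x)$ is irreducible of degree ${\rm ord}_d(q)$. Every monic divisor of $x^n-1$ is therefore of the form $\prod_{d,i} f_{d,i}(x)^{e_{d,i}}$ with $0\le e_{d,i}\le p^\nu$, and by additivity of degree its degree equals $\sum_{d\mid n'}{\rm ord}_d(q)\sum_i e_{d,i}$.

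The key step is to collapse the inner sum. For each fixed $d\mid n'$ I set $s_d:=\sum_i e_{d,i}$, where $i$ runs over the $\phi(d)/{\rm ord}_d(q)$ irreducible factors of degree ${\rm ord}_d(q)$. Since each $e_{d,i}$ ranges independently over $\{0,1,\dots,p^\nu\}$, the sum $s_d$ realizes exactly the integers in the interval $\bigl[0,\ p^\nu\phi(d)/{\rm ord}_d(q)\bigr]$ (every intermediate value being obtained by incrementing one exponent at a time), and the choices for distinct $d$ are mutually independent. Hence the set of degrees of monic divisors of $x^n-1$ is precisely $\bigl\{\sum_{d\mid n'}{\rm ord}_d(q)\, s_d : 0\le s_d\le p^\nu\phi(d)/{\rm ord}_d(q)\bigr\}$, so $\ell$ belongs to this set if and only if the displayed equation admits an integer solution within the stated bounds. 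Combining this with Lemma \ref{lem:char-non-dim} yields the proposition.

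I do not anticipate a genuine obstacle here, as the argument is largely bookkeeping. The only point requiring care is the claim that $s_d$ attains every integer in its stated range and that the per-$d$ choices are independent; this is exactly where one must use that the $f_{d,i}(x)$ are \emph{distinct} irreducibles, each appearing with full multiplicity $p^\nu$, so that the exponents $e_{d,i}$ may be selected freely subject only to $0\le e_{d,i}\le p^\nu$.
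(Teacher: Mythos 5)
Your proposal is correct and follows essentially the same route as the paper: reduce via Lemma \ref{lem:char-non-dim} to the existence of a monic divisor of $x^n-1$ of degree $\ell$, then read off the attainable degrees from the factorization $x^n-1=\prod_{d\mid n'}\prod_i f_{d,i}(x)^{p^\nu}$ by grouping exponents as $s_d=\sum_i e_{d,i}$. You are in fact slightly more explicit than the paper on the converse direction (that every value of $s_d$ in $[0,\,p^\nu\phi(d)/{\rm ord}_d(q)]$ is realizable), which the paper leaves as an implicit choice of divisor.
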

 	\begin{proof}  From \eqref{xn-1}, it can be deduced that 
 		\begin{align} \label{eq:xn-1} x^n-1=\left(x^{{n'}}-1\right)^{p^\nu}= \prod_{d|n'} \prod_{i=1}^{\frac{\phi(d)}{{\rm ord}_{ d}(q)}} (f_{d,i}(x))^{p^\nu},\end{align}
 		where  $f_{d,i}(x)$ is an irreducible polynomial of degree $ {\rm ord}_{ d}(q)$ induced by a $q$-cyclotomic coset in $\mathbb{Z}_{n'}$ of size  $ {\rm ord}_{ d}(q)$.   
 		
 		Assume that  there exists  a linear $\ell$-intersection pair of  cyclic codes of length $n$ over $\mathbb{F}_q$.  By Lemma \ref{lem:char-non-dim},  $x^n-1$ has a monic divisor of degree $\ell$.   Then 
 		\[ \ell= \deg\left(  \prod_{d|n'} \prod_{i=1}^{\frac{\phi(d)}{{\rm ord}_{ d}(q)}} (f_{d,i}(x))^{j_{d,i} }\right)  = \sum_{d|n'} \sum_{i=1}^{\frac{\phi(d)}{{\rm ord}_{ d}(q)}}  \deg\left( (f_{d,i}(x))^{j_{d,i} }\right)  =    \sum_{d|n'}  \left( {\rm ord}_{ d}(q) \sum_{i=1}^{\frac{\phi(d)}{{\rm ord}_{ d}(q)}}    {j_{d,i} } \right)  \]
 		for some  $0\leq {j_{d,i}} \leq p^\nu$.   
 		The result follows.

 		Conversely,  assume that 	\[ \sum _{d|n'}   {\rm ord}_{ d}(q)\cdot s_d =\ell\] 
 		has an integer  solution such that $0\leq s_d \leq   \dfrac{p^\nu \phi(d)}{{\rm ord}_{ d}(q)}$ for all $d|n'$. Based on \eqref{eq:xn-1},  a monic  divisor  $g(x)$  of $x^n-1$  can be chosen to have  degree $\ell$. Hence,  there exists  a linear $\ell$-intersection pair of  cyclic codes of length $n$ over $\mathbb{F}_q$  by Lemma \ref{lem:char-non-dim}.
 	\end{proof}
 	
 	The next corollary can be deduced directly form  Proposition \ref{prop:coset} and Proposition \ref{prop:exist}.
 	\begin{corollary}
 		Let $\mathbb{F}_q$ be a finite field of characteristic $p$ and let $ n'$ be a positive integer such that   $p\nmid n'$. Let $\ell$  be  an integer such  that   $0\leq  \ell  \leq n$.   Then the following statements are equivalent.
 		\begin{enumerate}[$i)$]
 			\item  There exists  a linear $\ell$-intersection pair of  cyclic codes of length $n'$ over $\mathbb{F}_q$.
 			\item The equation
 			
 			\[ \sum _{d|n'}   {\rm ord}_{ d}(q)\cdot x_d =\ell\] 
 			has an integer  solution such that $0\leq x_d \leq   \dfrac{ \phi(d)}{{\rm ord}_{ d}(q)}$ for all $d|n'$.
 			\item  There exist $q$-cyclotomic cosets of   $\mathbb{Z}_{n'}$ whose union has cardinality $\ell$.
 		\end{enumerate}
 	\end{corollary}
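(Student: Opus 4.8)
The plan is to establish the three-way equivalence by first deducing (i) $\Leftrightarrow$ (ii) as a direct specialization of Proposition \ref{prop:exist}, and then linking (ii) $\Leftrightarrow$ (iii) through the cyclotomic-coset structure recorded in Proposition \ref{prop:coset}. Since $p\nmid n'$, a cyclic code of length $n'$ corresponds exactly to the case $\nu=0$ in Proposition \ref{prop:exist}, where $p^{\nu}=1$ and the upper bound $\tfrac{p^{\nu}\phi(d)}{{\rm ord}_{d}(q)}$ collapses to $\tfrac{\phi(d)}{{\rm ord}_{d}(q)}$. Consequently the equivalence of (i) and (ii) is obtained verbatim from Proposition \ref{prop:exist} applied with $n=n'$ and $\nu=0$, with no further work required.

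For (ii) $\Leftrightarrow$ (iii), the key is to read the defining equation combinatorially. By Proposition \ref{prop:coset}(3)--(4), for each divisor $d\mid n'$ the set $A_d$ of elements of order $d$ in $\mathbb{Z}_{n'}$ is a disjoint union of $q$-cyclotomic cosets, each of cardinality ${\rm ord}_{d}(q)$ by Proposition \ref{prop:coset}(1)--(2); since $|A_d|=\phi(d)$, there are exactly $\tfrac{\phi(d)}{{\rm ord}_{d}(q)}$ such cosets. To prove (iii) $\Rightarrow$ (ii), given a family of $q$-cyclotomic cosets whose union has cardinality $\ell$, I would set $x_d$ equal to the number of chosen cosets consisting of elements of order $d$. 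Disjointness forces the union to have cardinality $\sum_{d\mid n'}{\rm ord}_{d}(q)\,x_d=\ell$, while the bound $x_d\leq\tfrac{\phi(d)}{{\rm ord}_{d}(q)}$ holds because only that many cosets of order $d$ are available. Conversely, for (ii) $\Rightarrow$ (iii), from a nonnegative integer solution I would select any $x_d$ of the available cosets of order $d$ for each $d$; their disjoint union then has cardinality $\sum_{d\mid n'}{\rm ord}_{d}(q)\,x_d=\ell$.

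Since the corollary is flagged as a direct consequence of the two propositions, I do not anticipate a genuine obstacle. The only point demanding care is the bookkeeping that converts ``choosing $x_d$ cosets of order $d$'' into ``selecting a monic divisor of $x^{n'}-1$ of degree $\ell$'' via the one-to-one correspondence in \eqref{xn-1}: a chosen coset of order $d$ contributes the degree-${\rm ord}_{d}(q)$ irreducible factor $f_{d,i}(x)$, so that the existence of the desired divisor, the constrained integer solution, and the coset family are three encodings of the same data. Verifying that this translation is a bijection at each stage is routine once the counting $|A_d|=\phi(d)$ and the uniform coset size ${\rm ord}_{d}(q)$ are in hand.
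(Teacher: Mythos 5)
Your proposal is correct and matches the paper's intended argument: the paper gives no written proof, stating only that the corollary ``can be deduced directly from Proposition \ref{prop:coset} and Proposition \ref{prop:exist},'' and your specialization to $\nu=0$ for $i)\Leftrightarrow ii)$ together with the coset-counting translation for $ii)\Leftrightarrow iii)$ is exactly that deduction, carried out in full.
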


 	For a small value of $\ell$, we have the following results. 
 	\begin{theorem}\label{thm:charExs} Let $\mathbb{F}_q$ be a finite field of characteristic $p$ and let $n=p^\nu n'$ be a positive integer such that   $p\nmid n'$ and $\nu\geq 0$. Let $\ell$, $k_1$, and $k_2$ be  integers such  that   $0\leq \ell \leq  k_1\leq k_2 \leq n$.  Then the following statements holds.
 		\begin{enumerate}[$i)$]
 			\item  If there exists a linear $\ell$-intersection pair of $[n,k_1]$ and $[n,k_2]_q$ cyclic codes, then   $k_1+k_2 -\ell \leq n$.
 			
 			\item  There always exists a linear $0$-intersection pair of cyclic codes of length $n$ over $\mathbb{F}_q$.
 			
 			\item  There always exists a linear $1$-intersection pair of cyclic codes of length $n$ over $\mathbb{F}_q$.
 			
 			\item    If $n$ is even, $\nu\geq 1$,  or  ${\rm ord}_{ d}(q)=2$ for some $d|n'$, there exists   a linear $2$-intersection pair of cyclic codes of length $n$ over $\mathbb{F}_q$.
 		\end{enumerate}
 	\end{theorem}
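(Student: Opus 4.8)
The plan is to dispatch part $i)$ by a direct dimension count and to treat parts $ii)$–$iv)$ uniformly as applications of Lemma \ref{lem:char-non-dim}, so that each reduces to exhibiting a monic divisor of $x^n-1$ of a prescribed degree.

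For $i)$, I would simply invoke the dimension formula $\dim(C_1+C_2)=\dim(C_1)+\dim(C_2)-\dim(C_1\cap C_2)$. Since $C_1+C_2$ is a subspace of $\mathbb{F}_q^n$, its dimension is at most $n$; substituting $\dim(C_1)=k_1$, $\dim(C_2)=k_2$, and $\dim(C_1\cap C_2)=\ell$ yields $k_1+k_2-\ell\leq n$ immediately. Equivalently, one could argue purely polynomially: by Theorem \ref{lcm} the relation $\deg({\rm lcm}(g_1(x),g_2(x)))+\deg(\gcd(g_1(x),g_2(x)))=\deg(g_1(x))+\deg(g_2(x))$ together with $\deg(\gcd(g_1(x),g_2(x)))\geq 0$ gives the same bound.

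For $ii)$ and $iii)$, Lemma \ref{lem:char-non-dim} reduces matters to producing a monic divisor of $x^n-1$ of degree $\ell$. For $\ell=0$ the constant polynomial $1$ always divides $x^n-1$, and for $\ell=1$ the factor $x-1$ always divides $x^n-1$ (as $1$ is a root). Hence a $0$-intersection pair and a $1$-intersection pair exist for every $n$ and every $q$, and no case analysis is required here.

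For $iv)$, again by Lemma \ref{lem:char-non-dim} I must exhibit a monic divisor of $x^n-1$ of degree $2$ under each of the three hypotheses. If $n$ is even, the telescoping identity $(x^2-1)(x^{n-2}+x^{n-4}+\cdots+x^2+1)=x^n-1$ shows $x^2-1\mid x^n-1$, furnishing a degree-$2$ divisor. If $\nu\geq 1$, then $x^n-1=(x^{n'}-1)^{p^\nu}$ by \eqref{eq:xn-1}, and since $x-1\mid x^{n'}-1$ and $p^\nu\geq 2$, the square $(x-1)^2$ divides $x^n-1$. If ${\rm ord}_d(q)=2$ for some $d\mid n'$, then \eqref{xn-1} supplies an irreducible factor $f_{d,i}(x)$ of degree ${\rm ord}_d(q)=2$, which is itself the required divisor. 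In each case the conclusion follows from Lemma \ref{lem:char-non-dim}. The main obstacle is precisely this case analysis in $iv)$: one must confirm that the exhibited polynomial genuinely has degree $2$ and divides $x^n-1$ in every characteristic, with the characteristic-$2$ situation (where $x^2-1=(x-1)^2$) and the prime-power multiplicity underlying the $\nu\geq 1$ case each requiring a small verification.
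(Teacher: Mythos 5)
Your proposal is correct and follows essentially the same route as the paper: parts $ii)$--$iv)$ exhibit the very same monic divisors $1$, $x-1$, and (case by case) $x^2-1$, $(x-1)^2$, or a degree-$2$ irreducible factor of $x^{n'}-1$, and then invoke the divisor criterion (the paper cites Proposition~\ref{prop:exist}, you cite the equivalent Lemma~\ref{lem:char-non-dim}). For part $i)$ you use $\dim(C_1+C_2)=k_1+k_2-\ell\leq n$ where the paper bounds $\deg({\rm lcm}(g_1(x),g_2(x)))\leq \deg(g_1(x))+\deg(g_2(x))$; these are the same inequality expressed in two languages, yours having the minor advantage of not requiring the codes to be cyclic at all.
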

 	\begin{proof}
 		To prove $i)$,  assume that  there exists a linear $\ell$-intersection pair of $[n,k_1]$ and $[n,k_2]_q$ cyclic codes with generator polynomials $g_1(x)$ and $g_2(x)$, respectively.  Then \[n-\ell =\deg({\rm  lcm} (g_1(x),g_2(x))\leq \deg(g_1(x))+\deg(g_2(x)) =2n-k_1-k_2\] which implies that  $k_1+k_2 -\ell \leq n$.
 		
 		Since $1$ is a monic divisor of $x^n-1$ of degree $0$,  there  exists a linear $0$-intersection pair of cyclic codes of length $n$ over $\mathbb{F}_q$ by Proposition \ref{prop:exist}. Hence, $ii)$ is proved.

 		Since $x-1$ is a monic divisor of $x^n-1$ of degree $1$,  there  exists a linear $1$-intersection pair of cyclic codes of length $n$ over $\mathbb{F}_q$ by Proposition \ref{prop:exist}. The statement $iii)$ follows immediately.

 		To prove $iv)$, assume that  $n$ is even, $\nu\geq 1$,  or  ${\rm ord}_{ d}(q)=2$ for some $d|n'$. 
 		
 		\noindent {\bf Case 1:} $n$ is even.  We have that $x^2-1$  is a monic divisor of $x^n-1$ of degree $2$. 
 		
 		\noindent {\bf Case 2:} $\nu\geq 1$.  It follows that  $(x-1)^2$  is a monic divisor of $x^n-1$ of degree $2$. 
 		
 		\noindent {\bf Case 3:}  ${\rm ord}_{ d}(q)=2$ for some $d|n'$.  Then there exists a $q$-cyclotomoc coset in $\mathbb{Z}_{n'}$  of size  ${\rm ord}_{ d}(q)=2$.  Hence,   $x^{n'}-1$ (resp., $x^{n}-1$) has a monic divisor of degree $2$.

 		From the 3 cases,  there exists a monic divisor of $x^n-1$ of degree $2$. Hence,  there  exists a linear $2$-intersection pair of cyclic codes of length $n$ over $\mathbb{F}_q$ by Proposition \ref{prop:exist}.
 	\end{proof}

 	\subsection{Constructions of Linear Intersection Pairs of Cyclic Codes}
 	
 	Based on results in the previous section,  constructions of some linear intersection pairs of cyclic codes with prescribe intersecting dimension are derived. 
 	
 	\begin{theorem} \label{thm:L(x)} Let $\mathbb{F}_q$ be a finite field of characteristic $p$ and  let  $n$  be a positive integer. Let $L(x)$ be a monic divisor of $x^n-1$ of degree $\ell$.  Let $g_1(x)$ and $g_2(x)$ be  monic  polynomials such that   $g_2(x)|g_1(x)|\dfrac{x^n-1}{L(x)}$ over  $\mathbb{F}_q$. 
 		Then the cyclic codes with generator polynomials $g_1(x)$  
 		and 
 		$k(x):=\dfrac{ g_2(x)(x^n-1)}{L(x)g_1(x)}$ form a linear $s$-intersection pair for some $\ell\leq s \leq \ell+\deg(g_1(x))$.
 		
 		In particular, if   $\gcd\left(g_1(x),  \dfrac{x^n-1}{L(x)g_1(x)}\right)=1$, then  the cyclic codes with generator polynomials $g_1(x)$  
 		and 
 		$k(x) $ form a linear $\ell$-intersection pair.
 	\end{theorem}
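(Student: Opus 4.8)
The plan is to reduce the whole statement to a single degree computation for ${\rm lcm}(g_1(x),k(x))$ and then invoke Theorem \ref{thm:degGen}. First I would verify that $k(x)$ is genuinely the generator polynomial of a cyclic code, i.e.\ that it is a monic divisor of $x^n-1$. Since $g_2(x)\mid g_1(x)\mid \frac{x^n-1}{L(x)}$, the quotient $\frac{x^n-1}{L(x)g_1(x)}$ is a polynomial, so $k(x)=g_2(x)\cdot\frac{x^n-1}{L(x)g_1(x)}$ is a polynomial; moreover $\frac{x^n-1}{k(x)}=L(x)\cdot\frac{g_1(x)}{g_2(x)}$ is a polynomial because $g_2(x)\mid g_1(x)$, so $k(x)\mid x^n-1$, and being a ratio of monic polynomials it is monic. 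Hence $k(x)$ generates a cyclic code $C_2$ while $g_1(x)$ generates a cyclic code $C_1$.

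By Theorem \ref{thm:degGen}, $C_1$ and $C_2$ form a linear $s$-intersection pair precisely when $\deg({\rm lcm}(g_1(x),k(x)))=n-s$. The key step is to evaluate this degree through the identity ${\rm lcm}(a(x),b(x))\cdot\gcd(a(x),b(x))=a(x)b(x)$ valid for monic polynomials over a field, which gives $\deg({\rm lcm}(g_1,k))=\deg(g_1)+\deg(k)-\deg(\gcd(g_1,k))$. Since $\deg(k)=\deg(g_2)+n-\ell-\deg(g_1)$, substituting and comparing with $n-s$ yields the clean formula $s=\ell+\deg(\gcd(g_1(x),k(x)))-\deg(g_2(x))$. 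Everything then hinges on locating $\gcd(g_1,k)$ within a divisibility chain.

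The crucial observation is that $g_2(x)\mid\gcd(g_1(x),k(x))\mid g_1(x)$: indeed $g_2\mid g_1$ by hypothesis and $g_2\mid k$ by construction, so $g_2$ divides both and hence divides their gcd, while the gcd of course divides $g_1$. Taking degrees gives $\deg(g_2)\le\deg(\gcd(g_1,k))\le\deg(g_1)$, and feeding this into the formula for $s$ produces $\ell\le s\le \ell+\deg(g_1)-\deg(g_2)\le\ell+\deg(g_1)$, which establishes the first assertion.

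For the refinement, write $h(x)=\frac{x^n-1}{L(x)g_1(x)}$ and factor $g_1(x)=g_2(x)m(x)$ with $m(x)=g_1(x)/g_2(x)$. Then $\gcd(g_1,k)=\gcd(g_2 m,\,g_2 h)=g_2\cdot\gcd(m,h)$. Assuming $\gcd(g_1,h)=1$, and using $m\mid g_1$, I would conclude $\gcd(m,h)\mid\gcd(g_1,h)=1$, so $\gcd(m,h)=1$ and therefore $\deg(\gcd(g_1,k))=\deg(g_2)$; the formula for $s$ then collapses to $s=\ell$, giving a linear $\ell$-intersection pair. I do not anticipate a genuine obstacle here: the only points requiring care are checking that $k(x)$ actually divides $x^n-1$ and the careful bookkeeping of degrees, since it is precisely the gcd--lcm relation that converts the divisibility chain $g_2\mid\gcd(g_1,k)\mid g_1$ into the stated bounds.
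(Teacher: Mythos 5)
Your proof is correct, and it takes a genuinely different (and in fact more careful) route than the paper's. The paper bounds $\deg(\operatorname{lcm}(g_1,k))$ by first asserting the identity $\operatorname{lcm}(g_1(x),k(x))=\operatorname{lcm}\bigl(g_1(x),\tfrac{x^n-1}{L(x)g_1(x)}\bigr)$ and then sandwiching the right-hand side between $\tfrac{x^n-1}{L(x)g_1(x)}$ and $\tfrac{x^n-1}{L(x)}$ in the divisibility order; for the refined statement it observes that under the coprimality hypothesis this lcm equals $\tfrac{x^n-1}{L(x)}$ exactly. You instead compute $\deg(\operatorname{lcm}(g_1,k))$ exactly via the product formula $\operatorname{lcm}\cdot\gcd=g_1k$ and locate $\gcd(g_1,k)$ in the chain $g_2\mid\gcd(g_1,k)\mid g_1$, which yields the closed form $s=\ell+\deg(\gcd(g_1,k))-\deg(g_2)$ and the slightly sharper bound $s\le \ell+\deg(g_1)-\deg(g_2)$; the coprime case then falls out of the factorization $\gcd(g_2m,g_2h)=g_2\gcd(m,h)$. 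Your approach buys robustness: the paper's intermediate lcm equality can actually fail when $x^n-1$ has repeated roots (e.g. $q=2$, $n=4$, $L=1$, $g_1=g_2=(x+1)^2$ gives $k=x^4-1$, so $\operatorname{lcm}(g_1,k)=(x+1)^4$ while $\operatorname{lcm}\bigl(g_1,\tfrac{x^n-1}{Lg_1}\bigr)=(x+1)^2$), even though the final inequalities it is used to justify remain true; your gcd-based bookkeeping handles such cases without incident. Both arguments ultimately reduce to Theorem \ref{thm:degGen}.
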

 	\begin{proof} It is not difficult to see that $k(x)$ is a divisor of $x^n-1$. Let $C_1$ and $C_2$ be the cyclic codes  with generator polynomials $g_1(x)$ and $k(x) $, receptively.  Clearly, \[ \dfrac{x^n-1}{L(x)g_1(x)} | {\rm lcm} \left( g_1(x),\dfrac{x^n-1}{L(x)g_1(x)} \right) | \dfrac{x^n-1}{L(x)}.\] Since $  g_2(x)|g_1(x)$, it follows that   \[{\rm lcm} ( g_1(x),k(x))= {\rm  lcm} \left( g_1(x),  \dfrac{g_2(x)(x^n-1)}{L(x)g_1(x) }\right) ={\rm lcm} \left( g_1(x),\dfrac{x^n-1}{L(x)g_1(x)} \right).\]
 		Hence,  $ n-(\ell+\deg(g_1(x)))\leq \deg ({\rm lcm} ( g_1(x),k(x))\leq n-\ell$. 
 		From Theorem \ref{thm:degGen},  $C_1$ and $C_2 $ form a  linear $s$-intersection pair of cyclic codes  for some $\ell\leq s \leq \ell+\deg(g_1(x))$.
 		
 		Next, assume that   $\gcd\left(g_1(x),  \dfrac{x^n-1}{L(x)g_1(x)}\right)=1$.  Then  \[{\rm lcm} ( g_1(x),k(x)) ={\rm lcm} \left( g_1(x),\dfrac{x^n-1}{L(x)g_1(x)} \right)  = \dfrac{x^n-1}{L(x)}\] which implies that 
 		$\deg ({\rm lcm} ( g_1(x),k(x))= n-\ell$. 
 		From Theorem \ref{thm:degGen},  $C_1$ and $C_2 $ form a  linear $\ell$-intersection pair.
 	\end{proof}
 	
 	For simple root cyclic codes, we have the following corollary. 
 	\begin{corollary} \label{cor:L(x)}
 		Let $\mathbb{F}_q$ be a finite field and let   $n$  be a positive integer such that $\gcd(n,q)=1$.  Let $L(x)$ be a monic divisor of $x^n-1$ of degree $\ell$.  Let $g_1(x)$ and $g_2(x)$ be  monic  polynomials such that   $g_2(x)|g_1(x)|\dfrac{x^n-1}{L(x)}$ over  $\mathbb{F}_q$. 
 		Then the cyclic codes with generator polynomials $g_1(x)$  
 		and 
 		$k(x):=\dfrac{ g_2(x)(x^n-1)}{L(x)g_1(x)}$ form a linear   $\ell$-intersection pair
 	\end{corollary}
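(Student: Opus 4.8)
The plan is to reduce this statement to the ``in particular'' clause of Theorem \ref{thm:L(x)} by showing that, under the hypothesis $\gcd(n,q)=1$, the extra condition $\gcd\left(g_1(x),\dfrac{x^n-1}{L(x)g_1(x)}\right)=1$ required there holds automatically. Thus the corollary is really just a specialization, and the single substantive point to verify is the separability of $x^n-1$.

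First I would recall that $\gcd(n,q)=1$ means $\nu=0$ and $n=n'$ in the factorization \eqref{xn-1}, so that $x^n-1$ is a product of pairwise distinct monic irreducible polynomials, each occurring with multiplicity one. Equivalently, $x^n-1$ is squarefree; one can also see this directly from $\gcd(x^n-1,\, nx^{n-1})=1$, which holds because $n$ is invertible in $\mathbb{F}_q$ when $p\nmid n$.

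Next I would observe that both $g_1(x)$ and $k_0(x):=\dfrac{x^n-1}{L(x)g_1(x)}$ are divisors of $\dfrac{x^n-1}{L(x)}$, which in turn divides the squarefree polynomial $x^n-1$. Since $g_1(x)\cdot k_0(x)=\dfrac{x^n-1}{L(x)}$ and no irreducible factor of a squarefree polynomial can occur in more than one factor of such a product, the polynomials $g_1(x)$ and $k_0(x)$ share no common irreducible factor. Hence $\gcd\left(g_1(x),\, k_0(x)\right)=1$.

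Finally, this is precisely the hypothesis of the ``in particular'' statement of Theorem \ref{thm:L(x)}, so invoking that result immediately gives that the cyclic codes generated by $g_1(x)$ and $k(x)=\dfrac{g_2(x)(x^n-1)}{L(x)g_1(x)}$ form a linear $\ell$-intersection pair. I do not expect any genuine obstacle here: the only nontrivial ingredient is the separability of $x^n-1$, and the remainder is a direct application of the theorem already proved.
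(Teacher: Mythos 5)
Your proposal is correct and follows exactly the paper's own route: the paper's proof likewise observes that $\gcd(n,q)=1$ forces $\gcd\left(g_1(x),\tfrac{x^n-1}{L(x)g_1(x)}\right)=1$ and then cites the ``in particular'' clause of Theorem \ref{thm:L(x)}. You merely supply the (correct) justification via squarefreeness of $x^n-1$ that the paper leaves implicit.
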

 	\begin{proof}
 		Since  $\gcd(n,q)=1$, it follows that  $\gcd\left(g_1(x),  \dfrac{x^n-1}{L(x)g_1(x)}\right)=1$.  The result follows from Theorem \ref{thm:L(x)}.
 	\end{proof}
 	
 	\begin{corollary} \label{cor:Lex-main} Let $\mathbb{F}_q$ be a finite field of characteristic $p$ and let $n=p^\nu n'$ be a positive integer such that   $p\nmid n'$ and $\nu\geq 0$.   Let $L(x)$ be a monic divisor of $x^{n'}-1$ of degree $\ell$.  Let  $g_1(x)$ and $g_2(x)$ be  monic  polynomials such that   $g_1(x)|\dfrac{x^{n'}-1}{L(x)}$  and  $g_2(x)|(g_1(x))^{p^\nu}$ over  $\mathbb{F}_q$.  
 		Then the cyclic codes with generator polynomials $(g_1(x))^{p^\nu}$  
 		and 
 		$k(x):=\dfrac{ g_2(x)(x^n-1)}{(L(x))^s(g_1(x))^{p^\nu}}$ form a linear   $\ell s$-intersection pair for all integers $0\leq s \leq p^\nu$.
 		
 	\end{corollary}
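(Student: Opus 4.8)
The plan is to deduce this directly from Theorem \ref{thm:L(x)} by a substitution that promotes the squarefree factorization over $\mathbb{F}_q$ into the repeated-root setting forced by $n=p^\nu n'$. First I would set $\widetilde{L}(x):=(L(x))^s$. Since $L(x)\mid x^{n'}-1$ and $0\le s\le p^\nu$, raising this divisibility to the $p^\nu$th power and using $x^n-1=(x^{n'}-1)^{p^\nu}$ shows that $\widetilde{L}(x)$ is a monic divisor of $x^n-1$; moreover $\deg\widetilde{L}(x)=s\ell=\ell s$, which is exactly the intersecting dimension to be established.

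Next I would check the hypotheses of Theorem \ref{thm:L(x)} for the triple $\big(\widetilde{L}(x),\,(g_1(x))^{p^\nu},\,g_2(x)\big)$. Writing $\frac{x^{n'}-1}{L(x)}=g_1(x)h(x)$ for some monic $h(x)$ (possible because $g_1(x)\mid\frac{x^{n'}-1}{L(x)}$), one obtains the factorization
\[ x^n-1=(x^{n'}-1)^{p^\nu}=(L(x))^{p^\nu}(g_1(x))^{p^\nu}(h(x))^{p^\nu}. \]
Hence $\frac{x^n-1}{\widetilde{L}(x)}=(L(x))^{p^\nu-s}(g_1(x))^{p^\nu}(h(x))^{p^\nu}$, so $(g_1(x))^{p^\nu}\mid\frac{x^n-1}{\widetilde{L}(x)}$, and together with the assumption $g_2(x)\mid(g_1(x))^{p^\nu}$ this yields exactly the chain $g_2(x)\mid(g_1(x))^{p^\nu}\mid\frac{x^n-1}{\widetilde{L}(x)}$ that Theorem \ref{thm:L(x)} requires. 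The polynomial produced by that theorem for this triple is $\frac{g_2(x)(x^n-1)}{\widetilde{L}(x)(g_1(x))^{p^\nu}}$, which is precisely the $k(x)$ in the statement.

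To land on the exact value $\ell s$ rather than a range, I would invoke the ``in particular'' clause of Theorem \ref{thm:L(x)}, which demands $\gcd\big((g_1(x))^{p^\nu},\,\frac{x^n-1}{\widetilde{L}(x)(g_1(x))^{p^\nu}}\big)=1$. From the factorization above this complementary factor equals $(L(x))^{p^\nu-s}(h(x))^{p^\nu}$, so the condition reduces to $\gcd(g_1(x),\,L(x)h(x))=1$. This is where the coprimality $p\nmid n'$ does the work: it makes $x^{n'}-1$ squarefree, so its factors $L(x)$, $g_1(x)$, and $h(x)$ are pairwise coprime, whence the required gcd is $1$. The only point that genuinely needs care is this squarefreeness argument, since the passage from the range $[\,\ell s,\,\ell s+\deg((g_1(x))^{p^\nu})\,]$ to the exact intersecting dimension $\ell s$ rests entirely on it; everything else is a matter of matching the hypotheses of Theorem \ref{thm:L(x)}.
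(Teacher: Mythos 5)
Your proposal is correct and follows essentially the same route as the paper: both apply Theorem \ref{thm:L(x)} with $(L(x))^{s}$ in place of $L(x)$ and $(g_1(x))^{p^\nu}$ in place of $g_1(x)$, using the \emph{in particular} clause to pin down the intersection dimension at $\ell s$. The only difference is that you spell out, via the factorization $x^n-1=(L(x))^{p^\nu}(g_1(x))^{p^\nu}(h(x))^{p^\nu}$ and the squarefreeness of $x^{n'}-1$, the divisibility chain and gcd condition that the paper dismisses as ``not difficult to see.''
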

 	\begin{proof}   Let $s$ be a integer such that  $0\leq s \leq p^\nu$. It is not difficult to see that  $g_2(x)|(g_1(x))^{p^\nu}| \dfrac{x^n-1}{(L(x))^s}$ and  $ \gcd\left((g_1(x))^{p^\nu},  \dfrac{x^n-1}{(L(x))^s(g_1(x))^{p^\nu}}\right)=1$. The result therefore  follows from  Theorem  \ref{thm:L(x)}. 
 	\end{proof}

 	\begin{corollary} Let $\mathbb{F}_q$ be a finite field of characteristic $p$ and let $n=p^\nu n'$ be a positive integer such that   $p\nmid n'$ and $\nu\geq 0$.  Let $g_1(x)$ and $g_2(x)$ be  monic  polynomials such that   $g_1(x)|x^{n'}-1$  and  $g_2(x)|(g_1(x))^{p^\nu}$ over  $\mathbb{F}_q$.  
 		Then the cyclic codes with generator polynomials $(g_1(x))^{p^\nu}$  
 		and 
 		$k(x):=\dfrac{ g_2(x)(x^n-1)}{(g_1(x))^{p^\nu}}$ form a linear   $0$-intersection pair.
 	\end{corollary}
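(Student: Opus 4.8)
The plan is to obtain this statement as the degenerate case $L(x)=1$ of Corollary \ref{cor:Lex-main}. Taking $L(x)=1$ forces $\ell=\deg(L(x))=0$, so that the linear $\ell s$-intersection pair promised there becomes a linear $0$-intersection pair for every admissible $s$. Moreover $(L(x))^s=1$ for all $s$, and hence the polynomial $k(x)$ appearing in Corollary \ref{cor:Lex-main} collapses to exactly $\frac{g_2(x)(x^n-1)}{(g_1(x))^{p^\nu}}$, which is precisely the $k(x)$ in the present statement. First I would check that the hypotheses of Corollary \ref{cor:Lex-main} are met under $L(x)=1$: the condition $g_1(x)\mid\frac{x^{n'}-1}{L(x)}=x^{n'}-1$ reduces to the assumption $g_1(x)\mid x^{n'}-1$, while $g_2(x)\mid(g_1(x))^{p^\nu}$ is carried over verbatim. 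Then choosing any integer $s$ with $0\leq s\leq p^\nu$ (for instance $s=0$) yields the claim directly.

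Alternatively, and more self-containedly, I would apply Theorem \ref{thm:L(x)} with $L(x)=1$ and with the generator polynomial $(g_1(x))^{p^\nu}$ playing the role of $g_1(x)$ there; note that $g_2(x)\mid(g_1(x))^{p^\nu}\mid (x^{n'}-1)^{p^\nu}=x^n-1$, so the chain of divisibilities required by that theorem holds. The crucial step is then to verify the coprimality hypothesis $\gcd\!\left((g_1(x))^{p^\nu},\,\frac{x^n-1}{(g_1(x))^{p^\nu}}\right)=1$ that upgrades the generic $s$-intersection conclusion to a genuine $0$-intersection pair. This is where $p\nmid n'$ enters: since $\gcd(n',q)=1$, the polynomial $x^{n'}-1$ is squarefree, so writing $x^{n'}-1=g_1(x)h(x)$ we have $\gcd(g_1(x),h(x))=1$; raising to the power $p^\nu$ gives $x^n-1=(g_1(x))^{p^\nu}(h(x))^{p^\nu}$ with $\frac{x^n-1}{(g_1(x))^{p^\nu}}=(h(x))^{p^\nu}$, and coprimality is preserved under taking these powers.

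The only genuine work lies in this coprimality verification; once it is established, Theorem \ref{thm:degGen} forces $\deg\!\left({\rm lcm}\!\left((g_1(x))^{p^\nu},k(x)\right)\right)=n-0=n$, so the two cyclic codes intersect in dimension $0$. Everything else is a direct substitution into the already-proven Corollary \ref{cor:Lex-main} (or Theorem \ref{thm:L(x)}), so I expect no real obstacle beyond the bookkeeping of matching the hypotheses to the specialization $L(x)=1$.
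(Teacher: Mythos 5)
Your primary argument is exactly the paper's proof: the paper also obtains this corollary by setting $L(x)=1$ in Corollary \ref{cor:Lex-main} (choosing $s=1$, though as you note any admissible $s$ gives the same conclusion since $\ell=0$). The alternative self-contained verification via Theorem \ref{thm:L(x)} is also sound but adds nothing beyond what the cited corollary already encapsulates.
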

 	\begin{proof}  By setting $L(x)=1$ and $s=1$, the result follows from     Corollary \ref{cor:Lex-main}. 
 	\end{proof}

 	\begin{corollary} Let $\mathbb{F}_q$ be a finite field of characteristic $p$ and  let $n=p^\nu n'$ be a positive integer such that   $p\nmid n'$ and $\nu\geq 0$.    Let $g_1(x)$ and $g_2(x)$ be  monic  polynomials such that   $g_1(x)|\dfrac{x^{n'}-1}{x-1}$  and  $g_2(x)|(g_1(x))^{p^\nu}$ over  $\mathbb{F}_q$.  
 		Then the cyclic codes with generator polynomials $ (g_1(x))^{p^\nu}$ and $k(x):=\dfrac{ g_2(x)(x^n-1)}{(x-1)^s(g_1(x))^{p^\nu}}$ form a linear $s$-intersection pair for all integers $0\leq s \leq p^\nu$.
 	\end{corollary}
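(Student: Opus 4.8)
The plan is to derive this corollary directly from Corollary~\ref{cor:Lex-main} by specializing to the divisor $L(x) = x - 1$. First I would confirm that $x - 1$ is a legitimate choice for $L(x)$: since $1$ is always a root of $x^{n'} - 1$ over $\mathbb{F}_q$, the polynomial $x - 1$ is a monic divisor of $x^{n'} - 1$, and it has degree $\ell = 1$. Thus the hypothesis of Corollary~\ref{cor:Lex-main} requiring $L(x)$ to be a monic degree-$\ell$ divisor of $x^{n'} - 1$ is met with $\ell = 1$.

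Next I would match the remaining hypotheses. With $L(x) = x - 1$, the divisibility condition $g_1(x) \mid \frac{x^{n'}-1}{L(x)}$ of Corollary~\ref{cor:Lex-main} reads $g_1(x) \mid \frac{x^{n'}-1}{x-1}$, which is precisely the assumption stated here, while the condition $g_2(x) \mid (g_1(x))^{p^\nu}$ is identical in both statements. Hence every hypothesis of Corollary~\ref{cor:Lex-main} is in force.

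Finally, substituting $L(x) = x - 1$ and $\ell = 1$ into the conclusion of Corollary~\ref{cor:Lex-main} transforms the second generator $k(x) = \frac{g_2(x)(x^n-1)}{(L(x))^s (g_1(x))^{p^\nu}}$ into $\frac{g_2(x)(x^n-1)}{(x-1)^s (g_1(x))^{p^\nu}}$ and the intersecting dimension $\ell s$ into $s$, giving the claimed linear $s$-intersection pair for every integer $0 \le s \le p^\nu$. Because the argument is a verbatim substitution into an already-established result, there is essentially no obstacle; the only point meriting comment is the elementary observation that $x - 1$ divides $x^{n'} - 1$, which guarantees the specialization is admissible.
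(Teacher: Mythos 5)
Your proposal is correct and matches the paper's own proof, which likewise obtains the result by setting $L(x)=x-1$ (so $\ell=1$) in Corollary~\ref{cor:Lex-main}. Your additional verification that $x-1$ is a monic degree-one divisor of $x^{n'}-1$ and that the remaining hypotheses transfer verbatim is sound but just fills in details the paper leaves implicit.
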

 	\begin{proof}
 		By setting $L(x)=x-1$, the result follows  from  Corollary \ref{cor:Lex-main}.
 	\end{proof}
 	
 	\begin{corollary} Let $\mathbb{F}_q$ be a finite field of characteristic $p$ and let $n=p^\nu n'$ be a positive integer such that   $p\nmid n'$ and $\nu\geq 0$.   Then the following statements hold.
 		
 		\begin{enumerate}[$(i)$]
 			\item If $n'$ is even,   then  $x^2-1$ divides $x^{n'}-1$. In this case, if  $g_1(x)$ and $g_2(x)$ are  monic  polynomial  such that  $g_1(x)|\dfrac{x^{n'}-1}{x^2-1}$ over  $\mathbb{F}_q$ and $g_2(x)|(g_1(x))^{p^\nu}$, then 
 			the cyclic codes with generator polynomials $(g_1(x))^{p^\nu}$ and $\dfrac{g_2(x)(x^n-1)}{(x^2-1)^s (g_1(x))^{p^\nu}}$ form a linear $2s$-intersection pair for  all integers $0\leq s \leq p^\nu$. 
 			\item If  ${\rm ord}_{ d}(q)=2$ for some $d|n'$, then    $x^{n'}-1$ has a monic irreducible factor $a(x)$ of degree $2$.  In this case,   if  $g_1(x)$ and $g_2(x)$ are  monic  polynomial  such that  $g_1(x)|\dfrac{x^{n'}-1}{a(x)}$ over  $\mathbb{F}_q$ and $g_2(x)|(g_1(x))^{p^\nu}$,  then the cyclic codes with generator polynomials $(g_1(x))^{p^\nu}$ and $\dfrac{g_2(x)(x^n-1)}{(a(x))^s (g_1(x))^{p^\nu}}$ form a linear $2s$-intersection pair all integers $0\leq s \leq p^\nu$. 
 		\end{enumerate}
 	\end{corollary}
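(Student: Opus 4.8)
The plan is to recognize that both parts $(i)$ and $(ii)$ are direct specializations of Corollary \ref{cor:Lex-main}, obtained by choosing an appropriate monic divisor $L(x)$ of $x^{n'}-1$ of degree $\ell=2$. Once such an $L(x)$ is identified, the conclusion that the stated pair of cyclic codes forms a linear $\ell s = 2s$-intersection pair follows immediately for every integer $0\le s\le p^\nu$. Thus the substance of the argument reduces to verifying, in each case, that the prescribed polynomial is genuinely a degree-$2$ monic divisor of $x^{n'}-1$, after which the divisibility hypotheses $g_1(x)\mid \frac{x^{n'}-1}{L(x)}$ and $g_2(x)\mid (g_1(x))^{p^\nu}$ can be fed straight into Corollary \ref{cor:Lex-main}.

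For part $(i)$, I would first establish that $x^2-1$ divides $x^{n'}-1$ whenever $n'$ is even. This is the standard divisibility $x^a-1\mid x^b-1$ whenever $a\mid b$; since $2\mid n'$, we obtain $x^2-1\mid x^{n'}-1$. One may note that $p\nmid n'$ together with $n'$ even forces $p$ to be odd, so $x^2-1=(x-1)(x+1)$ is a product of distinct irreducible factors of $x^{n'}-1$, consistent with the simple-root setting. Setting $L(x)=x^2-1$, so that $\ell=2$, and invoking Corollary \ref{cor:Lex-main} yields that the codes generated by $(g_1(x))^{p^\nu}$ and $\frac{g_2(x)(x^n-1)}{(x^2-1)^s(g_1(x))^{p^\nu}}$ form a linear $2s$-intersection pair for all $0\le s\le p^\nu$.

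For part $(ii)$, the key is to produce a degree-$2$ irreducible factor $a(x)$ of $x^{n'}-1$. By Proposition \ref{prop:coset}$(1)$, the hypothesis ${\rm ord}_d(q)=2$ for some $d\mid n'$ means there exists an element $b\in\mathbb{Z}_{n'}$ of order $d$ whose $q$-cyclotomic coset has size ${\rm ord}_{{\rm ord}(b)}(q)={\rm ord}_d(q)=2$. Under the one-to-one correspondence \eqref{xn-1} between $q$-cyclotomic cosets of $\mathbb{Z}_{n'}$ and irreducible factors of $x^{n'}-1$, this coset produces a monic irreducible factor $a(x)$ of degree $2$. Taking $L(x)=a(x)$, hence $\ell=2$ again, and applying Corollary \ref{cor:Lex-main} establishes the $2s$-intersection pair claim exactly as in part $(i)$.

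I do not expect a genuine obstacle here: the statement is a packaged application of Corollary \ref{cor:Lex-main}, and the only real content is the elementary divisibility fact in $(i)$ and the cyclotomic-coset bookkeeping in $(ii)$. The most error-prone point is simply confirming that the chosen $L(x)$ has degree exactly $2$ and is monic, and that the stated constraints on $g_1(x)$ and $g_2(x)$ match the hypotheses of Corollary \ref{cor:Lex-main}; both are immediate in each case.
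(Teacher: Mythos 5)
Your proposal is correct and matches the paper's own proof, which likewise just sets $L(x)=x^2-1$ in part $(i)$ and $L(x)=a(x)$ in part $(ii)$ and invokes Corollary \ref{cor:Lex-main}. The extra verifications you supply (the divisibility $x^2-1\mid x^{n'}-1$ for even $n'$ and the existence of a degree-$2$ irreducible factor via a cyclotomic coset of size ${\rm ord}_d(q)=2$) are exactly the implicit steps the paper leaves to the reader.
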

 	\begin{proof}
 		By setting $L(x)=x^2-1$ (resp., $L(x)=a(x)$), the result follows  from  Corollary \ref{cor:Lex-main}.
 	\end{proof}
 	
 	From the corollaries above, various   linear $\ell$-intersection pairs  of cyclic codes can be constructed by the variation of   $g_1(x)$ and $g_2(x) $.    For $\ell\in \{0,1,2\}$,  some illustrative  examples of  linear $\ell$-intersection pairs  of cyclic codes of length $n$ over $\mathbb{F}_2$  are given in Section~\ref{sec:Numer}
 	
 	\subsection{MDS Linear Intersection Pairs of Cyclic Codes}
 	
 	A linear $[n,k,d]_q$ code is said to be  Maximum Distance Separable (MDS)  if  the parameters attain   the Singleton bound $d\leq n-k+1$. 
 	In \cite{GGJT2020},    linear  $\ell$-intersection pairs of MDS linear codes  have been  derived via  the family of Generalized Reed-Solomon codes.   Here, we focus on constructions of   linear  $\ell$-intersection pairs of MDS cyclic codes using  Reed-Solomon codes.

 	\begin{theorem} Let $q$  be a prime power and let $n$  be a positive  integer such that $n|(q-1)$.   Let $k_1$ and $k_2$ be integers such that $0\leq k_1\leq k_2\leq n$. Then there exist a linear $\ell$-intersection pair of  $[n,k_1,n-k_1+1] _q$ and $[n,k_2,n-k_2+1] _q$  MDS  cyclic  codes 
 		for all integers $0\leq \ell \leq k_1$ such that  $k_1+k_2 -\ell \leq n$.
 		
 	\end{theorem}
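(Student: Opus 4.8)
The plan is to realize both MDS codes as Reed--Solomon codes, that is, as cyclic codes whose generator polynomials have consecutive roots, and then to control the intersection purely by counting roots. First I would record the structural consequences of the hypothesis $n\mid(q-1)$: in this case $\gcd(n,q)=1$ and $\mathbb{F}_q$ contains a primitive $n$th root of unity $\alpha$, so that $x^n-1=\prod_{i=0}^{n-1}(x-\alpha^i)$ splits into distinct linear factors over $\mathbb{F}_q$. Consequently every cyclic code of length $n$ over $\mathbb{F}_q$ is determined by an arbitrary subset $S\subseteq\{0,1,\dots,n-1\}$ through the generator polynomial $\prod_{i\in S}(x-\alpha^i)$, and the key observation is that when $S$ is a block of consecutive exponents, the resulting code is a Reed--Solomon code and is therefore MDS: the BCH bound gives $d\ge(n-k)+1$ while the Singleton bound gives the reverse inequality.

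The heart of the argument is a careful placement of two consecutive arcs of roots. I would set $t:=n-k_1-k_2+\ell$; the hypothesis $k_1+k_2-\ell\le n$ guarantees $t\ge 0$, and $\ell\le k_1\le k_2$ guarantees $t\le n-k_2\le n-k_1$. I would then take the consecutive exponent sets $S_1=\{0,1,\dots,n-k_1-1\}$ and $S_2=\{k_2-\ell,\,k_2-\ell+1,\dots,n-\ell-1\}$, both of which are legitimate arcs inside $\{0,\dots,n-1\}$ precisely because $0\le\ell\le k_1\le k_2\le n$. Letting $g_1(x)=\prod_{i\in S_1}(x-\alpha^i)$ and $g_2(x)=\prod_{i\in S_2}(x-\alpha^i)$, Theorem \ref{genC} yields cyclic codes $C_1$ and $C_2$ of dimensions $k_1$ and $k_2$; since $S_1$ consists of $n-k_1$ and $S_2$ of $n-k_2$ consecutive roots, both are Reed--Solomon codes and hence MDS by the remark above.

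It then remains to compute the intersection dimension. Because all roots of $x^n-1$ are simple, ${\rm lcm}(g_1(x),g_2(x))=\prod_{i\in S_1\cup S_2}(x-\alpha^i)$, so $\deg({\rm lcm}(g_1,g_2))=|S_1\cup S_2|$. A direct computation of the overlap shows $S_1\cap S_2=\{k_2-\ell,\dots,n-k_1-1\}$, which has exactly $t=n-k_1-k_2+\ell$ elements, so by inclusion--exclusion $|S_1\cup S_2|=(n-k_1)+(n-k_2)-t=n-\ell$. Therefore $\deg({\rm lcm}(g_1,g_2))=n-\ell$, and Theorem \ref{thm:degGen} immediately gives $\dim(C_1\cap C_2)=\ell$, so $C_1$ and $C_2$ form the desired linear $\ell$-intersection pair of MDS cyclic codes.

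The only genuine obstacle is the combinatorial placement in the second step: one must position the two arcs so that they overlap in exactly $t$ positions while each remains a single block of consecutive exponents, since it is consecutiveness that preserves the MDS property. The two inequalities in the hypothesis are exactly the feasibility conditions for this arrangement, as $k_1+k_2-\ell\le n$ forces $t\ge 0$ (an overlap of that size exists) and $\ell\le k_1$ forces $t\le n-k_2$ (the overlap fits inside the shorter arc). Once the arcs are fixed, everything else is routine bookkeeping.
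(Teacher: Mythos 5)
Your proposal is correct and follows essentially the same route as the paper: both realize $C_1$ and $C_2$ as Reed--Solomon codes with the generator polynomials $g_1(x)=\prod_{i=0}^{n-k_1-1}(x-\alpha^i)$ and $g_2(x)=\prod_{i=k_2-\ell}^{n-\ell-1}(x-\alpha^i)$, compute $\deg({\rm lcm}(g_1,g_2))=n-\ell$ from the overlap of the two arcs of consecutive roots, and conclude via Theorem \ref{thm:degGen}. The only difference is that you spell out a few details the paper leaves implicit (the inclusion--exclusion count and the BCH/Singleton justification of the MDS property), which is fine.
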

 	\begin{proof}  Let $\ell$ be an integer such that $0\leq \ell \leq k_1$  and  $k_1+k_2 -\ell \leq n$.  Let $\alpha $ be a primitive $n$th root of unity. Let  $C_1$ and $C_2$ be  Reed-Solomon codes with generator polynomials
 		\[g_1(x) =\prod_{i=0}^{n-k_1-1} (x-\alpha^i) \text{ and }   g_2(x) =\prod_{i=k_2-\ell}^{n-\ell-1} (x-\alpha^i) ,\] respectively. 
 		Since $k_2-\ell  \leq n-k_1-1$,   we have that  \[{\rm lcm}(g_1(x),g_2(x))= \prod_{i=0}^{n-\ell-1} (x-\alpha^i)\] is of  degree $n-\ell$. By Theorem \ref{thm:degGen}, it follows that  $C_1$ and $C_2$ form a  linear $\ell$-intersection pair of cyclic codes.  We note that $\deg(g_1(x))=n-k_1$ and $\deg(g_2(x))=n-k_2$. Using the properties of Reed-Solomon codes, $C_1$ and $C_2$ are MDS with parameters $[n,k_1,n-k_1+1] _q$ and $[n,k_2,n-k_2+1] _q$, respectively.
 	\end{proof}

 	\section{Numerical Results} \label{sec:Numer}
 	In this section,  some illustrative examples of linear intersection pairs of cyclic codes over $\mathbb{F}_2$ are given based on theoretical results in Section 3.  The codes presented in Table \ref{t1}--\ref{t3} are optimal.     The calculation in this paper is done using the  computer algebra Magma \cite{BCP1997}. The optimal codes refer to the database \cite{G2023}.

 	\begin{landscape}
 		
 		\begin{table}
 			\centering
 			\begin{tabular}{|c|c|l|l|}
 				\hline
 				$C$ &$D$ & $g_C(x)$& $g_D(x)$  \\ \hline
 				
 				$[ 7 , 3 , 4 ]_ 2 $& $[ 7 , 3 , 4 ]_ 2 $&$ x^4 + x^2 + x + 1
 				$&$ x^4 + x^3 + x^2 + 1
 				$ \\ \hline
 				
 				$[ 7 , 4 , 3 ]_ 2 $& $[ 7 , 3 , 4 ]_ 2 $&$ x^3 + x^2 + 1
 				$&$ x^4 + x^3 + x^2 + 1
 				$ \\ \hline

 				$[ 9 , 2 , 6 ]_ 2 $& $[ 9 , 6 , 2 ]_ 2 $&$ x^7 + x^6 + x^4 + x^3 + x + 1
 				$&$ x^3 + 1
 				$ \\ \hline
 				
 				$[ 9 , 7 , 2 ]_ 2 $& $[ 9 , 2 , 6 ]_ 2 $&$ x^2 + x + 1
 				$&$ x^7 + x^6 + x^4 + x^3 + x + 1
 				$ \\ \hline
 				
 				$[ 15 , 2 , 10 ]_ 2 $& $[ 15 , 4 , 8 ]_ 2 $&$ x^{13} + x^{12} + x^{10} + x^9 + x^7 +
 				x^6 + x^4 + x^3 + x + 1
 				$&$ x^{11} + x^8 + x^7 + x^5 + x^3 + x^2 + x + 1
 				$ \\ \hline						
 				$[ 15 , 2 , 10 ]_ 2 $& $[ 15 , 8 , 4 ]_ 2 $&$ x^{13} + x^{12} + x^{10} + x^9 + x^7 +
 				x^6 + x^4 + x^3 + x + 1
 				$&$ x^7 + x^6 + x^5 + x^2 + x + 1
 				$ \\ \hline

 				$[ 15 , 2 , 10 ]_ 2 $& $[ 15 , 12 , 2 ]_ 2 $&$ x^{13} + x^{12} + x^{10} + x^9 + x^7 +
 				x^6 + x^4 + x^3 + x + 1
 				$&$ x^3 + 1
 				$ \\ \hline

 				$[ 15 , 4 , 8 ]_ 2 $& $[ 15 , 8 , 4 ]_ 2 $&$ x^{11} + x^8 + x^7 + x^5 + x^3 + x^2
 				+ x + 1
 				$&$ x^7 + x^3 + x + 1
 				$ \\ \hline
 				$[ 15 , 4 , 8 ]_ 2 $& $[ 15 , 4 , 8 ]_ 2 $&$ x^{11} + x^8 + x^7 + x^5 + x^3 + x^2
 				+ x + 1
 				$&$ x^{11} + x^{10} + x^9 + x^8 + x^6 + x^4 + x^3 + 1
 				$ \\ \hline
 				
 				$[ 15 , 5 , 7 ]_ 2 $& $[ 15 , 2 , 10 ]_ 2 $&$ x^{10} + x^9 + x^8 + x^6 + x^5 + x^2
 				+ 1
 				$&$ x^{13} + x^{12} + x^{10} + x^9 + x^7 + x^6 + x^4 + x^3 + x + 1
 				$ \\ \hline				
 				$[ 15 , 5 , 7 ]_ 2 $& $[ 15 , 6 , 6 ]_ 2 $&$ x^{10} + x^9 + x^8 + x^6 + x^5 + x^2
 				+ 1
 				$&$ x^9 + x^6 + x^5 + x^4 + x + 1
 				$ \\ \hline

 				$[ 15 , 5 , 7 ]_ 2 $& $[ 15 , 8 , 4 ]_ 2 $&$ x^{10} + x^9 + x^8 + x^6 + x^5 + x^2
 				+ 1
 				$&$ x^7 + x^3 + x + 1
 				$ \\ \hline

 				$[ 15 , 5 , 7 ]_ 2 $& $[ 15 , 10 , 4 ]_ 2 $&$ x^{10} + x^9 + x^8 + x^6 + x^5 + x^2
 				+ 1
 				$&$ x^5 + x^4 + x^2 + 1
 				$ \\ \hline

 				$[ 15 , 6 , 6 ]_ 2 $& $[ 15 , 4 , 8 ]_ 2 $&$ x^9 + x^8 + x^5 + x^4 + x^3 + 1
 				$&$ x^{11} + x^{10} + x^9 + x^8 + x^6 + x^4 + x^3 + 1
 				$ \\ \hline
 				
 				$[ 15 , 6 , 6 ]_ 2 $& $[ 15 , 8 , 4 ]_ 2 $&$ x^9 + x^6 + x^5 + x^4 + x + 1
 				$&$ x^7 + x^6 + x^4 + 1
 				$ \\ \hline

 				$[ 15 , 7 , 5 ]_ 2 $& $[ 15 , 4 , 8 ]_ 2 $&$ x^8 + x^4 + x^2 + x + 1
 				$&$ x^{11} + x^{10} + x^9 + x^8 + x^6 + x^4 + x^3 + 1
 				$ \\ \hline
 				
 				$[ 15 , 7 , 5 ]_ 2 $& $[ 15 , 8 , 4 ]_ 2 $&$ x^8 + x^7 + x^6 + x^4 + 1
 				$&$ x^7 + x^6 + x^4 + 1
 				$ \\ \hline

 				$[ 15 , 9 , 4 ]_ 2 $& $[ 15 , 6 , 6 ]_ 2 $&$ x^6 + x^4 + x^3 + x^2 + 1
 				$&$ x^9 + x^7 + x^6 + x^3 + x^2 + 1
 				$ \\ \hline
 				$[ 15 , 9 , 4 ]_ 2 $& $[ 15 , 2 , 10 ]_ 2 $&$ x^6 + x^4 + x^3 + x^2 + 1
 				$&$ x^{13} + x^{12} + x^{10} + x^9 + x^7 + x^6 + x^4 + x^3 + x + 1
 				$ \\ \hline

 				$[ 15 , 10 , 4 ]_ 2 $& $[ 15 , 4 , 8 ]_ 2 $&$ x^5 + x^3 + x + 1
 				$&$ x^{11} + x^{10} + x^9 + x^8 + x^6 + x^4 + x^3 + 1
 				$ \\ \hline
 				$[ 15 , 11 , 3 ]_ 2 $& $[ 15 , 4 , 8 ]_ 2 $&$ x^4 + x + 1
 				$&$ x^{11} + x^8 + x^7 + x^5 + x^3 + x^2 + x + 1
 				$ \\ \hline
 				
 				$[ 15 , 13 , 2 ]_ 2 $& $[ 15 , 2 , 10 ]_ 2 $&$ x^2 + x + 1
 				$&$ x^{13} + x^{12} + x^{10} + x^9 + x^7 + x^6 + x^4 + x^3 + x + 1
 				$ \\ \hline
 				
 				$[ 17 , 8 , 6 ]_ 2 $& $[ 17 , 8 , 6 ]_ 2 $&$ x^9 + x^6 + x^5 + x^4 + x^3 + 1
 				$&$ x^9 + x^8 + x^6 + x^3 + x + 1
 				$ \\ \hline
 				$[ 17 , 9 , 5 ]_ 2 $& $[ 17 , 8 , 6 ]_ 2 $&$ x^8 + x^7 + x^6 + x^4 + x^2 + x + 1
 				$&$ x^9 + x^8 + x^6 + x^3 + x + 1
 				$ \\ \hline

 			\end{tabular}
 			\caption{Optimal linear $0$-Intersection Pairs of Binary Cyclic Codes} \label{t1}
 		\end{table}

 		\begin{table}
 			\centering
 			\begin{tabular}{|c|c|l|l|}
 				\hline
 				$C$ &$D$ & $g_C(x)$& $g_D(x)$ \\ \hline
 				
 				$[ 7 , 4 , 3 ]_ 2 $& $[ 7 , 4 , 3 ]_ 2 $&$ x^3 + x^2 + 1
 				$&$ x^3 + x + 1
 				$ \\ \hline

 				$[ 15 , 5 , 7 ]_ 2 $& $[ 15 , 5 , 7 ]_ 2 $&$ x^{10} + x^9 + x^8 + x^6 + x^5 + x^2
 				+ 1
 				$&$ x^{10} + x^8 + x^5 + x^4 + x^2 + x + 1
 				$ \\ \hline
 				$[ 15 , 7 , 5 ]_ 2 $& $[ 15 , 5 , 7 ]_ 2 $&$ x^8 + x^4 + x^2 + x + 1
 				$&$ x^{10} + x^8 + x^5 + x^4 + x^2 + x + 1
 				$ \\ \hline
 				$[ 15 , 11 , 3 ]_ 2 $& $[ 15 , 5 , 7 ]_ 2 $&$ x^4 + x + 1
 				$&$ x^{10} + x^9 + x^8 + x^6 + x^5 + x^2 + 1
 				$ \\ \hline

 				$[ 17 , 9 , 5 ]_ 2 $& $[ 17 , 9 , 5 ]_ 2 $&$ x^8 + x^7 + x^6 + x^4 + x^2 + x + 1
 				$&$ x^8 + x^5 + x^4 + x^3 + 1
 				$ \\ \hline

 			\end{tabular}
 			\caption{Optimal linear $1$-Intersection Pairs of Binary Cyclic Codes}\label{t2}
 		\end{table}

 		\begin{table}
 			\centering
 			\begin{tabular}{|c|c|l|l|c|}
 				\hline
 				$C$ &$D$ & $g_C(x)$& $g_D(x)$  \\ \hline
 				
 				$[ 9 , 2 , 6 ]_ 2 $& $[ 9 , 2 , 6 ]_ 2 $&$ x^7 + x^6 + x^4 + x^3 + x + 1
 				$&$ x^7 + x^6 + x^4 + x^3 + x + 1
 				$ \\ \hline
 				
 				$[ 9 , 8 , 2 ]_ 2 $& $[ 9 , 2 , 6 ]_ 2 $&$ x + 1
 				$&$ x^7 + x^6 + x^4 + x^3 + x + 1
 				$ \\ \hline

 				$[ 15 , 6 , 6 ]_ 2 $& $[ 15 , 2 , 10 ]_ 2 $&$ x^9 + x^6 + x^5 + x^4 + x + 1
 				$&$ x^{13} + x^{12} + x^{10} + x^9 + x^7 + x^6 + x^4 + x^3 + x + 1
 				$ \\ \hline
 				
 				$[ 15 , 6 , 6 ]_ 2 $& $[ 15 , 6 , 6 ]_ 2 $&$ x^9 + x^8 + x^5 + x^4 + x^3 + 1
 				$&$ x^9 + x^7 + x^6 + x^3 + x^2 + 1
 				$ \\ \hline		 
 				
 				$[ 15 , 6 , 6 ]_ 2 $& $[ 15 , 10 , 4 ]_ 2 $&$ x^9 + x^8 + x^5 + x^4 + x^3 + 1
 				$&$ x^5 + x^4 + x^2 + 1
 				$ \\ \hline
 				
 				$[ 15 , 7 , 5 ]_ 2 $& $[ 15 , 2 , 10 ]_ 2 $&$ x^8 + x^4 + x^2 + x + 1
 				$&$ x^{13} + x^{12} + x^{10} + x^9 + x^7 + x^6 + x^4 + x^3 + x + 1
 				$ \\ \hline
 				
 				$[ 15 , 7 , 5 ]_ 2 $& $[ 15 , 6 , 6 ]_ 2 $&$ x^8 + x^4 + x^2 + x + 1
 				$&$ x^9 + x^6 + x^5 + x^4 + x + 1
 				$ \\ \hline
 				
 				$[ 15 , 7 , 5 ]_ 2 $& $[ 15 , 10 , 4 ]_ 2 $&$ x^8 + x^4 + x^2 + x + 1
 				$&$ x^5 + x^4 + x^2 + 1
 				$ \\ \hline

 				$[ 15 , 10 , 4 ]_ 2 $& $[ 15 , 2 , 10 ]_ 2 $&$ x^5 + x^3 + x + 1
 				$&$ x^{13} + x^{12} + x^{10} + x^9 + x^7 + x^6 + x^4 + x^3 + x + 1
 				$ \\ \hline	
 				
 				$[ 15 , 10 , 4 ]_ 2 $& $[ 15 , 6 , 6 ]_ 2 $&$ x^5 + x^3 + x + 1
 				$&$ x^9 + x^6 + x^5 + x^4 + x + 1
 				$ \\ \hline
 				
 				$[ 15 , 11 , 3 ]_ 2 $& $[ 15 , 2 , 10 ]_ 2 $&$ x^4 + x^3 + 1
 				$&$ x^{13} + x^{12} + x^{10} + x^9 + x^7 + x^6 + x^4 + x^3 + x + 1
 				$ \\ \hline
 				
 				$[ 15 , 11 , 3 ]_ 2 $& $[ 15 , 6 , 6 ]_ 2 $&$ x^4 + x^3 + 1
 				$&$ x^9 + x^6 + x^5 + x^4 + x + 1
 				$ \\ \hline

 				$[ 15 , 14 , 2 ]_ 2 $& $[ 15 , 2 , 10 ]_ 2 $&$ x + 1
 				$&$ x^{13} + x^{12} + x^{10} + x^9 + x^7 + x^6 + x^4 + x^3 + x + 1
 				$ \\ \hline

 			\end{tabular}
 			\caption{Optimal linear $2$-Intersection Pairs of Binary Cyclic Codes} \label{t3}
 		\end{table}

 	\end{landscape}
 	
 	\section{Conclusion and Remarks}
 	
 	Linear intersection pairs of  cyclic  codes over finite fields have been studied.  Some properties of cyclotomic cosets in cyclic groups have been  presented and recalled as key tools in the study of   such linear intersection pairs.  Characterization for the existence of a linear intersection pair of cyclic codes and constructions of  cyclic  codes pairs  of a fixed intersecting dimension    have been investigated and presented  in terms of their generator polynomials and  cyclotomic cosets.  
 	Based on the theoretical results,  some numerical  examples of linear intersection pairs of  cyclic  codes with  good parameters have been illustrated.

 	\section*{Acknowledgments}
 	This project is funded by the National Research Council of Thailand and Silpakorn University  under Research Grant  N42A650381.


\begin{thebibliography}{99}
 		\bibitem{BCP1997}  W. Bosma, J. Cannon, C.  Playoust, The Magma algebra system. I. The user language, {\em J. Symbolic Comput.},  {\bf 24}  (1997),  235--265. \url{https://doi.org/10.1006/jsco.1996.0125}
 		
 		
 		
 		
 		
 		
 		\bibitem{CG2016} C. Carlet, S. Guilley, Complementary dual codes for counter-measures to side-channel attacks,  {\em Adv. Math. Commun.},   {\bf 10} (2016),  131--150. \url{https://doi/10.3934/amc.2016.10.131}
 		
 		
 		\bibitem{carlet1} C. Carlet,  C. G\"uneri, S. Mesnager,  F. \"Ozbudak,
 		Construction of some codes suitable for both side channel and fault injection attacks, {\em Proceedings of International Workshop on the Arithmetic of Finite Fields (WAIFI 2018)}, Bergen, 2018. 
 		
 		
 		
 		
 		
 		
 		\bibitem{carlet2} C. Carlet, C. G\"uneri,  F. \"Ozbudak, B. \"Ozkaya, P. Sol\'e,  On linear complementary pairs of codes, {\em IEEE Trans. Inform. Theory},  {\bf 64} (2018), 6583--6589. \url{https://doi.org/10.1109/TIT.2018.2796125}
 		
 		
 		
 		
 		%
 		%
 		
 		\bibitem{G2023}   M. Grassl,  Bounds on the minimum distance of linear codes and quantum codes,  
 		Online available at http://www.codetables.de.
 		Accessed on 2023-09-02.
 		
 		\bibitem{GGJT2020}  K.  Guenda, T. A. Gulliver,  S.  Jitman, S.Thipworawimon, Linear $\ell$-intersection pairs of codes and their applications, {\em Des. Codes Cryptogr.},  {\bf 88}  (2020),  133--152. \url{https://doi.org/10.1007/s10623-019-00676-z}
 		
 		
 		
 		
 		
 		\bibitem{Huff03} W. C. Huffman,  V. Pless, {\em Fundamentals of error-correcting codes}, Cambridge: Cambridge University Press, 2003.  
 		
 		\bibitem{Jia11} Y. Jia,  S. Ling, C. Xing, On self-dual cyclic codes over finite fields, {\em  IEEE Trans. Inform. Theory}, {\bf 57} (2011),  2243--2251.  \url{https://doi.org/10.1109/TIT.2010.2092415}
 		
 		\bibitem{Jin2017}  L. Jin,  Construction of MDS codes with complementary duals,   {\em IEEE Trans. Inform. Theory},  {\bf 63} (2017), 2843--2847. \url{https://doi.org/10.1109/TIT.2016.2644660}
 		
 		\bibitem{JLLX2012} S. Jitman, S. Ling, H. Liu,  H., X.   Xie,  Abelian codes in principal ideal group algebras, {\em IEEE Trans. Info. Theory},   {\bf 59} (2013),  3046--3058.  \url{https://doi.org/10.1109/TIT.2012.2236383}
 		
 		
 		\bibitem{KR2012}	L. Kathuria, M. Raka,  Existence of cyclic self-orthogonal codes: A note on a result of Vera Pless, {\em Advances in Mathematics of Communications},  {\bf 6} (2012), 499--503.  \url{https://doi.org/10.3934/amc.2012.6.499}
 		
 		\bibitem{Lidl97} R. Lidl, H. Niederreiter,  {\em  Finite fields}, Cambridge: Cambridge University Press,   1997.
 		
 		
 		
 		
 		
 		\bibitem{Mac} F. J. MacWilliams,  N. J. A.  Sloane, {\em   The theory of error-correcting codes}, Amesterdam: North-Holland Publishing,  1977.
 		
 		\bibitem{M1992} J.L. Massey, 
 		Linear codes with complementary duals,
 		{\em Discrete Math.},  {\bf 106-107}  (1992),
 		337--342. \url{https://doi.org/10.1016/0012-365X(92)90563-U}
 		
 		\bibitem{QZ2018}  J. Qian, L. Zhang, 
 		On MDS linear complementary dual codes and entanglement-assisted quantum codes, {\em Des. Codes Cryptogr.},  {\bf  86} (2018),   1565--1572. \url{https://doi.org/10.1007/s10623-017-0413-x}
 		
 		
 		
 		
 		\bibitem{SJU2015} E.	Sangwisut, S. Jitman, S. Ling, P. Udomkavanich,  Hulls of cyclic and negacyclic codes over finite fields, {\em Finite Fields Appl.}, {\bf 33} (2015), 232--257. \url{https://doi.org/10.1016/j.ffa.2014.12.008}
 		
 		\bibitem{S2003} 	G. Skersys,  The average dimension of the hull of cyclic codes,  {\em Discrete Appl. Math.},  {\bf 128} (2003),  275--292. \url{https://doi.org/10.1016/S0166-218X(02)00451-1}
 		
 		
 		\bibitem{TJ2020}  S. Thipworawimon, S. Jitman,  Hulls of linear codes revisited with applications, {\em  J. Appl. Math. Comput.},  {\bf 62} (2020), 325--340.  \url{https://doi.org/10.1007/s12190-019-01286-7}
 		
 		\bibitem{YM1998} X. Yang,  J. L. Massey, 
 		The condition for a cyclic code to have a complementary dual,
 		{\em Discrete  Math.},
 		{\bf 126} 	(1994),  391--393. \url{https://doi.org/10.1016/0012-365X(94)90283-6}
 		
 	\end{thebibliography}
\end{document}